\newtheorem{theorem}{Theorem}
\newtheorem{lemma}{Lemma}
\newproof{proof}{Proof}
\journal{Computational Geometry}
\newcommand{\no}[1]{}
\newcommand{\Oh}[1]{\ensuremath{\mathcal{O}\!\left({#1}\right)}}
\renewcommand{\lg}{\log}
\newcommand{\rank}{\ensuremath{\mathsf{rank}}}
\newcommand{\select}{\ensuremath{\mathsf{select}}}
\newcommand{\match}{\ensuremath{\mathsf{match}}}
\newcommand{\parent}{\ensuremath{\mathsf{parent}}}
\newcommand{\first}{\ensuremath{\mathsf{first}}}
\newcommand{\last}{\ensuremath{\mathsf{last}}}
\newcommand{\next}{\ensuremath{\mathsf{next}}}
\newcommand{\prev}{\ensuremath{\mathsf{prev}}}
\newcommand{\mate}{\ensuremath{\mathsf{mate}}}
\newcommand{\vertex}{\ensuremath{\mathsf{vertex}}}
\newcommand{\degree}{\ensuremath{\mathsf{degree}}}
\newcommand{\neighbour}{\ensuremath{\mathsf{neighbour}}}
\newcommand{\listing}{\ensuremath{\mathsf{listing}}}
\newcommand{\face}{\ensuremath{\mathsf{face}}}
\newcommand{\asgn}{\mathrel{=}}
\begin{document}

\begin{frontmatter}

\title{Fast and Compact Planar Embeddings\tnoteref{wadsnote}}
\tnotetext[wadsnote]{A previous version of this paper appeared in the 15th
  Algorithms and Data Structures Symposium (WADS 2017)\cite{Ferres2017}.}


\author[udd]{Leo Ferres}
\ead{lferres@udd.cl}

\author[dcc,cebib]{Jos\'e Fuentes-Sep\'ulveda}
\ead{jfuentess@dcc.uchile.cl}

\author[udp,cebib]{Travis Gagie\corref{cor1}}
\ead{travis.gagie@mail.udp.cl}

\author[dal]{Meng He}
\ead{mhe@cs.dal.ca}

\author[dcc,cebib]{Gonzalo Navarro}
\ead{gnavarro@dcc.uchile.cl}

\address[udd]{Faculty of Engineering, Universidad del Desarrollo \& Telef\'onica
  I+D, Santiago, Chile}
\address[dcc]{Department of Computer Science, University of Chile,
  Santiago, Chile.}
\address[cebib]{Center of Biotechnology and Bioengineering, University of Chile,
  Santiago, Chile.}
\address[udp]{School of Computer Science and Telecommunications, Diego Portales
  University, Santiago, Chile}
\address[dal]{Faculty of Computer Science, Dalhousie University, Halifax, Canada}

\cortext[cor1]{Corresponding author}

\begin{abstract}
There are many representations of planar graphs, but few are as elegant as
Tur\'an's (1984): it is simple and practical, uses only 4 bits per edge, can
handle self-loops and multi-edges, and can store any specified embedding. 
Its main disadvantage
has been that ``it does not allow efficient searching'' (Jacobson, 1989).  In
this paper we show how to add a sublinear number of bits to Tur\'an's
representation such that it supports fast navigation while retaining simplicity.
As a consequence of the inherited simplicity, we offer the first efficient
parallel construction of a compact encoding of
a planar graph embedding. Our experimental results show that the resulting
representation uses about 6 bits per edge in practice, supports basic 
navigation operations within a few microseconds, and can be built sequentially 
at a rate below 1 microsecond per edge, featuring a linear speedup with a
parallel efficiency around 50\% for large datasets.
\end{abstract}

\begin{keyword}
Planar embedding\sep Compact data structures\sep Parallel construction
\end{keyword}

\end{frontmatter}


\section{Introduction}
\label{sec:introduction}

The rate at which we store data is increasing even faster than the speed and
capacity of computing hardware. Thus, if we want to use efficiently what we 
store, we need to represent it in better ways. The surge in the number
and complexity of the maps we want to have available on mobile devices is
particularly pronounced and has resulted in a bewildering number of ways to
store planar graphs. Each of these representations has its disadvantages,
however: some do not support fast navigation, some are large, some cannot
represent multi-edges or certain embeddings, and some are costly to build.
In this paper we introduce a compact representation of planar graph 
embeddings that addresses all these issues, and demonstrate its practicality.

More concretely, as described in Section~\ref{sec:related}, a planar embedding 
with $n$ nodes and $m$ edges can be represented in \(m \lg 12
\approx 3.58 m\) bits 
\cite{Tutte1963}, which has been matched with $o(m)$-bit redundancy with a 
structure that in addition supports efficient
navigation \cite{BlellochFarzan2010}. The structure is, however, complex and no
implementation has been attempted. The much simpler representation of Tur\'an 
\cite{Turan1984} uses $4m$ bits, which is still close to the lower bound, but 
it does not support navigation. The other existing representations require 
more than $4m$ bits for general planar embeddings, some restrict the embeddings
where they apply, and most have complicated construction algorithms.
The majority of these constructions cannot be parallelized, and the others
require $\Oh{m\log m}$ work. 

Our contribution in this paper is threefold: 
\begin{enumerate}
\item We show how to add \(o (m)\)
bits to Tur\'an's representation such that it supports fast navigation. We can
list the edges incident to any vertex in clockwise or counter-clockwise order 
using constant time per edge, including starting the enumeration at any desired
neighbour. As a consequence, we can also list the nodes on a face in constant 
time per node.
We can also find a vertex's degree in time $\Oh{f(m)}$ for any $f(m) \in
\omega(1)$, and determine whether two vertices are neighbours in $\Oh{f(m)}$ 
time for any $f(m) \in \omega(\log m)$.
\item We give a parallel algorithm that builds our data structure from any 
spanning tree of the planar embedding, in $\Oh{m}$ work and $\Oh{\log m}$ span 
($\Oh{\log^2 m}$ span to support the neighbour query). This is the 
first linear-work practical parallel algorithm for building compact 
representations of planar graphs. 
\item We implement and experimentally evaluate the space, query, and 
construction performance of our representation. In practice, our structure uses
less than $6m$ bits, performs navigation operations within a few microseconds,
and can be built sequentially at a rate below 1 microsecond per edge. The
parallel algorithm scales linearly, with an efficiency around 50\% for large
datasets, with up to 24 processors. 
\end{enumerate}

Summarizing, we offer the first simple compact representation of planar
embeddings, which is easy to program, uses little space, and is efficiently 
built and navigated. Our structure is thousands of times faster than the 
classical one when compression makes our representation fit in main memory.
We leave the code publicly available at 
\url{http://www.dcc.uchile.cl/~jfuentess/pemb/}.

Tur\'an chooses an arbitrary spanning tree of the graph, roots it at a vertex on
the outer face and traverses it, writing its balanced-parentheses representation
as he goes and interleaving that sequence with another over a different binary
alphabet, consisting of an occurrence of one character for the first time he
sees each edge not in the tree and an occurrence of the other character for the
second time he sees that edge.  These two sequences can be written as three
sequences over \(\{0, 1\}\): one of length \(2 n - 2\) encoding the
balanced-parentheses representation of the tree; one of length \(2 m - 2n + 2\)
encoding the interleaved sequence; and one of length \(2 m\) indicating how they
are interleaved.  Our extension of this representation is based on the
observation that the interleaved sequence encodes the balanced-parentheses
representation of the complementary spanning tree of the dual of the graph.  By
adding a sublinear number of bits to each balanced-parentheses representation,
we can support fast navigation in the trees, and by storing the sequence
indicating the interleaving as a bitvector with support for operations $\rank$
and $\select$ \cite{Jacobson1989}, we can support fast navigation in the graph.

Section~\ref{sec:related} surveys the related work on compact representations
of planar embeddings. Section~\ref{sec:preliminaries} describes 
bitvectors and the balanced-parentheses representation of trees, which are the 
building blocks of our extension of Tur\'an's representation. We also describe 
the model of parallelism we use in our construction algorithms. In 
Section~\ref{sec:trees} we prove the observation mentioned above on Tur\'an's
interleaved sequence. In
Section~\ref{sec:structure} we describe our data structure and how we implement
queries.  Section~\ref{sec:parallel} describes our parallel construction
algorithm and discusses some implementation issues. In
Section~\ref{sec:experiments} we describe our experiments on space, query and
construction performance, and discuss the results. Finally, in 
Section~\ref{sec:future} we present our conclusions and future work directions. 

\section{Related work} \label{sec:related}

Tutte~\cite{Tutte1963} showed that representing a specified embedding of a
connected planar multi-graph with $n$ vertices and $m$ edges requires \(m \lg 12
\approx 3.58 m\) bits in the worst case.  Tur\'an~\cite{Turan1984} gave a very
simple representation that uses \(4 m\) bits. Jacobson~\cite{Jacobson1989}
argued that this representation ``does not allow fast searching'' and
(introducing techniques that we will apply to Tur\'an's representation)
proposed 
one that instead uses $\Oh{m}$ bits and supports fast navigation, based on
book embeddings \cite{1989y}. Munro and Raman \cite{MR01} estimated that 
Jacobson's representation uses $64n$ bits and proposed one using $2m+8n+o(m)$ 
bits that retains fast navigation, still based on the same book embeddings
(but this does not handle self-loops). 
Keeler and Westbrook~\cite{KeelerWestbrook1995} also noted that ``the constant 
factor in [Jacobson's] space bound is relatively large'' and gave a 
representation that uses \(m \lg 12 + \Oh{1}\) bits for planar graphs (not 
embeddings), as well as for planar embeddings containing either no self-loops 
or no vertices with degree 1; however, they again gave up fast navigation. 
Chiang, Lin and Lu~\cite{ChiangLinLu2005}, improving previous work by Chuang
{\it et al.}~\cite{CGHKL98}, gave a representation (without allowing self-loops) that
uses \(2 m + 3 n + o (m)\) bits with fast navigation, based on so-called orderly
spanning trees. However, although all planar graphs can be represented with 
orderly spanning trees, some planar embeddings cannot. For simple planar
embeddings (i.e., no self-loops nor multiple edges, thus $m \le 3n$), their 
space decreases to $2n+2m+o(m) \le 4m+o(m)$ on connected graphs. Barbay {\it et
  al.}~\cite{bchm2012} gave a data structure that uses $\Oh{m}$ bits to
represent  
simple planar graphs with fast navigation, based on realizers of 
planar triangulations~\cite{s1990}. Still, their constant is relatively large, 
$18n+o(m)$. Finally, Blelloch and Farzan~\cite{BlellochFarzan2010}, extending
the work of Blandford {\it et al.}.~\cite{bbk2003}, matched for the first time
Tutte's lower bound on
general planar embeddings, with a structure that uses \(m \lg 12 + o (m)\) bits
and supports fast navigation. Their structure is based on small vertex
separators~\cite{lt1979}. They can also represent any planar graph within its
lower-bound space plus a sublinear redundancy, even when the exact lower bound
is unknown for general planar graphs \cite{BGHPS06}. While Blelloch and Farzan
closed the problem in theoretical terms, their representation is complicated 
and has not been implemented. 
Other authors \cite{HKL00,aleardi2005succinct,aleardi2008succinct,fusy2008dissections,YN10}
have considered special kinds of planar graphs, notably tri-connected planar
graphs and triangulations.  
We refer the reader to Munro and Nicholson's \cite{MunroNicholson2016} and 
Navarro's \cite[Chapter~9]{Navarro2016} recent surveys for further discussion 
of compact data structures for graphs.

Most of the navigable representations we have mentioned require 
complicated construction algorithms, generally defying a parallel 
implementation.
It is not known how to compute a book embedding \cite{1989y} in parallel, 
which is necessary to build the representations of Jacobson and of Munro and 
Raman. There are also no parallel algorithms to build orderly spanning trees
\cite{ChiangLinLu2005}, necessary for the representation of Chiang {\it et al.}
Its predecessor \cite{CGHKL98} uses instead a triangulation and a canonical
ordering; for the latter there is only a CREW construction running in 
$\Oh{\log^4 n}$ time with $n^2$ processors \cite{He1993}. 
As for the vertex separators \cite{lt1979} required to build the representation
of Blandford {\it et al.}\ and of Blelloch and Farzan, Kao {\it et al.}~\cite{ktt1995} 
designed a linear-work, logarithmic-span algorithm for computing a cycle 
separator of a planar graph. However, the construction of these representations
of planar embeddings decompose the input graph by repeatedly computing
separators until each piece is sufficiently small. This increases the total 
work to $\Oh{n \log n}$ even if this optimal parallel algorithm is used.
The best linear-work parallel algorithms \cite{kfhr1994} for building the
realizers \cite{s1990} used in the construction of Barbay {\it et al.}'s 
representation have $\Oh{\log n}$ span in the expected case but
$\Oh{\log n \log \log n}$ deterministic span.

\section{Preliminaries}
\label{sec:preliminaries}

\subsection{Bitvectors and parentheses}

A bitvector is a binary string that supports the queries $\rank$ and $\select$
in addition to random access, where \(\rank_b (i)\) returns the number of bits
set to $b$ in the prefix of length $i$ of the string and \(\select_b (j)\)
returns the position of the $j$th bit set to $b$.  For convenience, we define
\(\select_b (0) = 0\). 

It is possible to represent a bitvector of length $\ell$ 
in \(\ell + o(\ell)\) bits and support random access, $\rank$ and $\select$ in 
constant time \cite{Jacobson1989,Cla96,Mun96}. Furthermore, if the bitvector
has $k$ 1s, it can be represented in $\log {\ell \choose k} + o(\ell)$ bits
\cite{RRR07}, which is $\ell H(k/\ell)+o(\ell) = k\log(\ell/k)+\Oh{k}+o(\ell)$,
with $H(x)=-x\log x-(1-x)\log(1-x)$. All our logarithms are to the base 2 
unless otherwise stated.

By adding some further operations on the bitvectors, we can represent an 
ordered tree or forest of $t$ vertices using \(2 t + o (t)\) bits and support 
natural navigation queries in constant time. One of the most popular such 
representations is a string of balanced parentheses: we traverse each tree 
from left to right, writing an opening parenthesis when we first visit a vertex
(starting at the root) and a closing parenthesis when we leave it for the last 
time (or, in the case of the root, when we finish the traversal). We can encode
the string of parentheses as a bitvector of length $2t$, with 0s encoding 
opening parentheses and 1s encoding closing parentheses. By adding $o(t)$ 
further bits, we can support in constant time, among others, the following 
queries used by our solution \cite{MR01,GRRR06,NS14}:
\begin{itemize}
\item \(\match (i)\) locates the position of the parenthesis matching the $i$th
parenthesis in the bitvector (i.e., finds the other parenthesis referring to the same node);
\item \(\parent (v)\) returns the parent of $v$, or 0 if $v$ is the root of its
tree. Nodes $v$ and $\parent(v)$ are represented as their pre-order rank in
the traversal.
\end{itemize}

\subsection{Parallel computation model}

As we focus on practical parallel algorithms, we describe and analyze our
construction using the {\em Dynamic Multithreading (DyM) Model} 
\cite{Cormen2009} (we nevertheless express our final results in terms of the
PRAM model as well). In the DyM model, a multithreaded computation is modelled 
as a directed acyclic graph (DAG) where vertices are instructions and edge 
$(u,v)$ represents precedence between instructions $u$ and $v$. The model
is based on two parameters of the multithreaded computation: its {\em work}
$T_1$ and its {\em span} $T_\infty$. The work is the running time on a single
thread, that is, the number of nodes (i.e., instructions) in the DAG, assuming
each instruction takes constant time. The span is the length of the longest path
in the DAG, that is,
the intrinsically sequential part of the computation. The time $T_p$
needed to execute the computation on $p$ threads then has complexity
$\Theta(T_1/p+T_\infty)$, which can be reached with a greedy scheduler. The
improvement of a multithreaded computation using $p$ threads is called {\em
  speedup}, $T_1/T_p$. The upper bound on the achievable speedup,
$T_1/T_\infty$, is called {\em parallelism}.  Finally, the {\em efficiency} is
defined as $T_1 /pT_p$ and can be interpreted as the percentage of improvement
achieved by using $p$ cores or how close we are to the linear speedup. In the
DyM model, the workload of the threads is balanced by using the {\em
  work-stealing} algorithm \cite{Blumofe:1999:SMC:324133.324234}.

To describe parallel algorithms in the DyM model, we augment sequential
pseudocode with three keywords. The {\bf spawn} keyword, followed by a procedure
call, indicates that the procedure should run in its own thread and may thus be
executed in parallel to the thread that spawned it. The {\bf sync} keyword
indicates that the current thread must wait for the termination of all threads
it has spawned. Finally, {\bf parfor} is ``syntactic sugar'' for {\bf spawn}ing
one thread per iteration in a for loop, thereby allowing these iterations to run
in parallel, followed by a {\bf sync} operation that waits for all iterations to
complete. In practice, the {\bf parfor} keyword is implemented by halving the
range of loop iterations, spawning one half and using the current procedure to
process the other half recursively until reaching one iteration per range. After
that, the iterations are executed in parallel. Therefore, this implementation
adds an overhead bounded above by the logarithm of the number of loop
iterations. We include such overheads in our complexities.

\section{Spanning trees of planar graphs}
\label{sec:trees}

It is well known \cite{Biggs1971,Eppstein2003,RileyThurston2006} 
that for any spanning tree $T$ of a connected planar graph $G$,
the edges dual to $T$ are a spanning tree $T^*$ of the dual of $G$, with $T$ and
$T^*$ interdigitating; see Figure~\ref{fig:trees} for an illustration (including
multi-edges and a self-loop). If we
choose $T$ as the spanning tree of $G$ for Tur\'an's representation, then we
store a 0 and a 1, in that order, for each edge in $T^*$.  We now show that
these bits encode a traversal of $T^*$.

\begin{figure}[t!]
\centering
\resizebox{.6\textheight}{!}
{\begin{tabular}{c@{\hspace{10ex}}c}
& \multirow{3}{*}
{ \sf
\begin{tabular}{r@{\hspace{3ex}}c@{\hspace{2ex}}c@{\hspace{2ex}}c}
& \textcolor{red}{$T$} & \(G - T\) & \textcolor{blue}{$T^*$} \\
\hline\\[-.5ex]
1 & & (1, 3) & \textcolor{blue}{(A, B)} \\
2 & \textcolor{red}{(1, 2)} & & \\
3 & \textcolor{red}{(2, 3)} & & \\
4 & & (1, 3) & \textcolor{blue}{(A, B)} \\
5 & \textcolor{red}{(2, 3)} & & \\
6 & \textcolor{red}{(2, 4)} & & \\
7 & & (4, 8) & \textcolor{blue}{(A, C)} \\
8 & \textcolor{red}{(2, 4)} & & \\
9 & & (2, 6) & \textcolor{blue}{(C, D)} \\
10 & \textcolor{red}{(1, 2)} & & \\
11 & \textcolor{red}{(1, 5)} & & \\
12 & \textcolor{red}{(5, 6)} & & \\
13 & & (2, 6) & \textcolor{blue}{(C, D)} \\
14 & & (6, 8) & \textcolor{blue}{(C, E)} \\
15 & \textcolor{red}{(5, 6)} & & \\
16 & & (5, 7) & \textcolor{blue}{(E, F)} \\
17 & \textcolor{red}{(1, 5)} & & \\
18 & \textcolor{red}{(1, 7)} & & \\
19 & & (5, 7) & \textcolor{blue}{(E, F)} \\
20 & \textcolor{red}{(7, 8)} & & \\
21 & & (6, 8) & \textcolor{blue}{(C, E)} \\
22 & & (4, 8) & \textcolor{blue}{(A, C)} \\
23 & & (7, 8) & \textcolor{blue}{(A, G)} \\
24 & \textcolor{red}{(7, 8)} & & \\
25 & & (7, 8) & \textcolor{blue}{(A, G)} \\
26 & \textcolor{red}{(1, 7)} & & \\
27 & & (1, 1) & \textcolor{blue}{(A, H)} \\
28 & & (1, 1) & \textcolor{blue}{(A, H)}
\end{tabular}} \\[-3ex]
\includegraphics[height=.3\textheight]{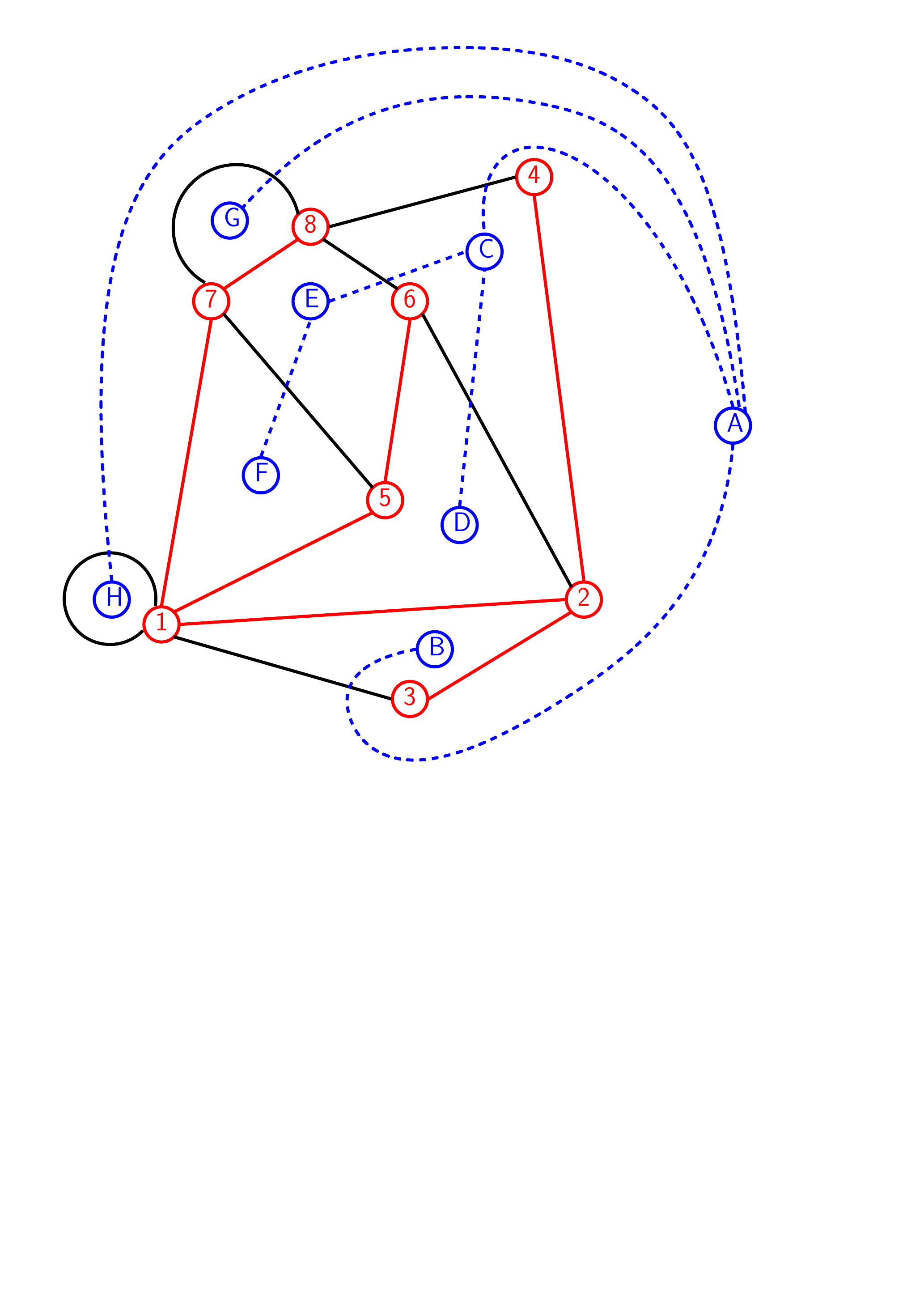} & \\[10ex]
\includegraphics[height=.3\textheight]{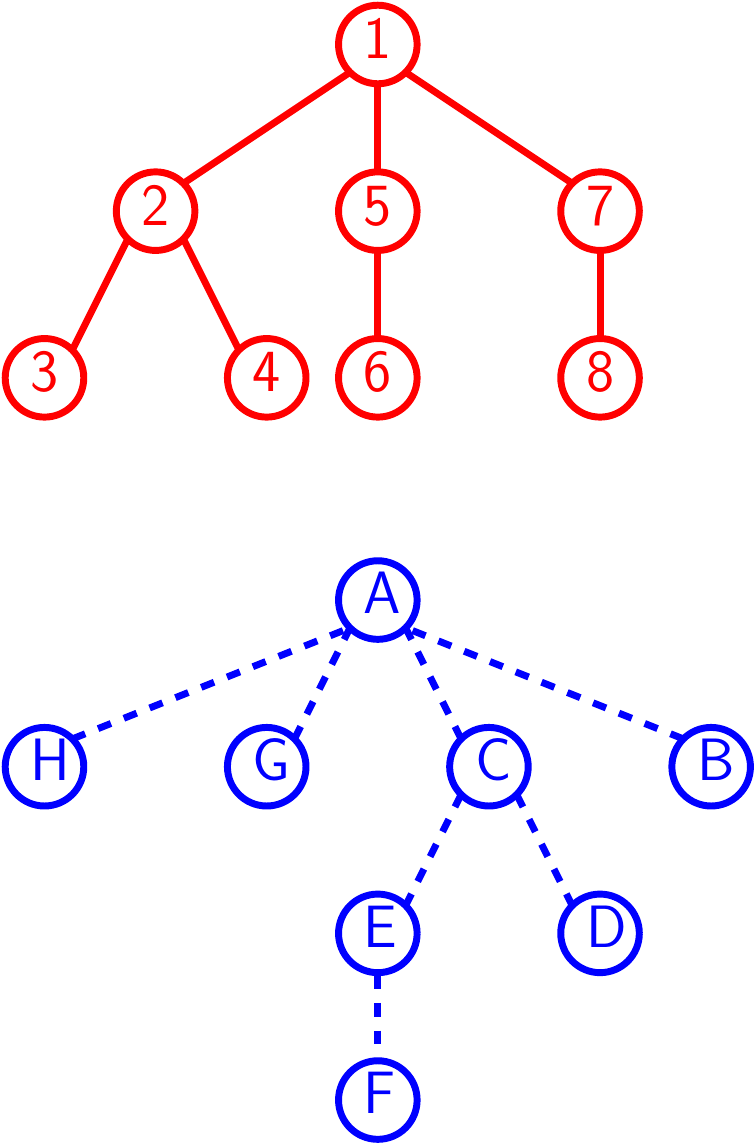} & \\[10ex]
\end{tabular}}
\vspace*{-1cm}
\caption{{\bf Top left:} A planar embedding of a planar graph $G$, with a
  spanning tree $T$ of $G$ shown in red and the complementary spanning tree
  $T^*$ of the dual of $G$ shown in blue with dashed lines.  {\bf Bottom left:}
  The two spanning trees, with $T$ rooted at the vertex {\sf 1} on the outer
  face and $T^*$ rooted at the vertex {\sf A} corresponding to the outer face.
  {\bf Right:} The list of edges we process while traversing $T$ starting at
  {\sf 1} and processing edges in counter-clockwise order, with the edges in $T$
  shown in red and the ones in \(G - T\) shown in black; the edges of $T^*$
  corresponding to the edges in \(G - T\) are shown in blue.}
\label{fig:trees}
\end{figure}

\begin{lemma}
\label{lem:trees}
Consider any planar embedding of a planar graph $G$, any spanning tree $T$ of
$G$ and the complementary spanning tree $T^*$ of the dual of $G$.  If we perform
a depth-first traversal of $T$ starting from any vertex on the outer face of $G$
and always process the edges incident to the vertex $v$ we are visiting in
counter-clockwise order (starting from the edge immediately after the one to
$v$'s parent or, if $v$ is the root of $T$, from immediately after any incidence
of the outer face), then each edge not in $T$ corresponds to the next edge we
cross in a depth-first traversal of $T^*$.
\end{lemma}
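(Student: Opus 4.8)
The plan is to read the counter-clockwise depth-first traversal of $T$ as a walk around the boundary of a thin disk neighbourhood of $T$, and to show that whenever this boundary walk steps across a non-tree edge $e$ it is traversing the dual edge $e^{*}$; the walk it thereby induces on $T^{*}$ is then forced to be a depth-first tour.

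First I would fix a drawing realizing the given embedding and replace $T$ by a thin closed disk $D$ that is a regular neighbourhood of $T$, chosen thin enough that its interior meets $G$ only in parts of the edges of $T$. Then $\gamma=\partial D$ is a simple closed curve, and the classical ``Euler tour'' fact says that if we start at a point of $\gamma$ in a corner of the root $r$ incident to the outer face $A$ and walk around $\gamma$ keeping $D$ on a fixed side, we meet the vertices and the two sides of each edge of $T$ in exactly the order in which the traversal described in the statement processes them; in particular $\gamma$ starts and ends in $A$, and the $k$-th non-tree edge processed by that traversal is the $k$-th non-tree edge met by $\gamma$. The precise tie-breaking in the statement (start just after the edge to the parent, or just after an incidence of the outer face at the root, and proceed counter-clockwise) is exactly what makes this correspondence hold, and I would check it from the local picture at a single vertex.

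Next I would analyse how $\gamma$ meets the edges of $G$. A tree edge lies inside $D$, so $\gamma$ does not cross it; after perturbing each non-tree edge slightly near its endpoints so that it meets $\gamma$ transversally, the fact that the non-tree edges are pairwise disjoint and all lie outside $D$ gives that each non-tree edge $e$ crosses $\gamma$ in exactly two points, with distinct non-tree edges having disjoint crossing points. Between two consecutive crossings $\gamma$ stays inside a single face of $G$, and crossing a non-tree edge $e$ moves $\gamma$ from one of the two faces incident to $e$ to the other, i.e.\ it is a traversal of the dual edge $e^{*}$. By the interdigitating tree--cotree fact recalled just before the lemma, $\{\,e^{*}:e\notin T\,\}$ is precisely the edge set of $T^{*}$, so the sequence of faces that $\gamma$ passes through is a closed walk $W$ in $T^{*}$ that begins and ends at $A$ and uses each edge of $T^{*}$ exactly twice and no other edge. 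Self-loops and multi-edges need no special handling here, the argument being purely topological.

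To conclude I would invoke the elementary fact that a closed walk in a tree that starts and ends at a fixed vertex $\rho$ and uses each edge exactly twice is the Euler tour of some depth-first traversal of that tree rooted at $\rho$: for each edge $\rho w$, deleting it splits off the subtree hanging from $w$; the walk crosses this cut exactly twice, so its visit to that subtree is one contiguous block covering each of the subtree's edges twice, hence by induction a DFS tour of it, and these blocks partition $W$. Applied to $W$ and $T^{*}$ with $\rho=A$, this says the $k$-th edge crossed by $\gamma$ is the $k$-th edge traversed by a depth-first traversal of $T^{*}$, which by the first step is exactly the claim. The main obstacle is this middle step: making ``crossing a non-tree edge $=$ traversing a dual edge'' rigorous in the presence of self-loops and multi-edges, and verifying that the perturbation preserves the order in which $\gamma$ meets things. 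A more bookkeeping-heavy alternative that sidesteps the topology is induction on $|T^{*}|$: delete a non-tree edge $e$ whose dual is a leaf of $T^{*}$ (taking that leaf to be a face other than $A$ when possible); this merges two faces and leaves $T$, and the counter-clockwise DFS of $T$, literally unchanged except that $e$ is no longer seen, so the claim for $T^{*}-e^{*}$ yields the claim for $T^{*}$ once $e$ is reinserted, the delicate point there being the handling of the outer face.
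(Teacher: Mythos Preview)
Your proof is correct and takes a genuinely different route from the paper's. The paper argues by step-by-step induction on the number of edges processed, maintaining the invariant that after each step the ``current'' vertex of $T^*$ is precisely the face of $G$ incident to the current vertex of $T$ and to the previous and next edges in the counter-clockwise order; the two cases (tree edge vs.\ non-tree edge) are each verified by inspecting the local picture around the edge being processed. Your approach is instead global and topological: you realize the Euler tour of $T$ as the boundary curve $\gamma$ of a regular neighbourhood of $T$, observe that the crossings of $\gamma$ with non-tree edges are exactly traversals of dual edges, and then argue separately (your Step~5) that the resulting closed walk in $T^*$ must be a depth-first tour because any closed walk in a tree that uses each edge exactly twice is one. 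The paper's approach yields a slightly stronger running invariant---at every moment one knows exactly which face one is in---which is what the data structure in Section~\ref{sec:structure} actually exploits, but it leaves implicit the final step that such a walk in a tree is a DFS; your Step~5 makes that explicit. Conversely, your argument gives a cleaner conceptual picture and would generalize more readily to embeddings on other surfaces, at the price of having to justify the perturbation and the two-crossings claim carefully in the presence of self-loops and multi-edges, as you already flag.
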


\begin{proof}
Suppose the traversal of $T^*$ starts at the vertex of the dual of $G$
corresponding to the outer face of $G$.  We now prove by induction that the
vertex we are visiting in $T^*$ always corresponds to the face of $G$ incident
to the vertex we are visiting in $T$ and to the previous and next edges in
counter-clockwise order.

Our claim is true before we process any edges, since we order the edges starting
from an incidence of the outer face to the root of $T$.  Assume it is still true
after we have processed \(i < m\) edges, and that at this time we are visiting
$v$ in $T$ and $v^*$ in $T^*$.  First suppose that the \((i + 1)\)th edge \((v,
w)\) we process is in $T$.  We note that \(w \neq v\), since otherwise \((v,
w)\) could not be in $T$.  We cross from $v$ to $w$ in $T$, which is also
incident to the face corresponding to $v^*$.  Now \((v, w)\) is the previous
edge --- considering their counter-clockwise order at $w$, starting from \((v,
w)\) --- and the next edge (which is \((v, w)\) again if $w$ has degree 1) is
also incident to $v^*$.  This is illustrated on the left side of
Figure~\ref{fig:induction}.  In fact, the next edge is the one after \((v, w)\)
in a clockwise traversal of the edges incident to the face corresponding to
$v^*$.

Now suppose \((v, w)\) is not in $T$ and let $w^*$ be the vertex in $T^*$
corresponding to the face on the opposite side of \((v, w)\), which is also
incident to $v$.  We note that \(w^* \neq v^*\), since otherwise \((v, w)\)
would have to be in $T$.  We cross from $v^*$ to $w^*$ in $T^*$.  Now \((v, w)\)
is the previous edge --- this time still considering their counter-clockwise
order at $v$ --- and the next edge (which may be \((v, w)\) again if it is a
self-loop) is also incident to $w^*$.  This is illustrated on the right side of
Figure~\ref{fig:induction}.  In fact, the next edge is the one that follows
\((v, w)\) in a clockwise traversal of the edges incident to the face
corresponding to $w^*$.

Since our claim remains true in both cases after we have processed \(i + 1\)
edges, by induction it is always true.  In other words, whenever we should
process next an edge $e$ in $G$ that is not in $T$, we are visiting in $T^*$ one
of the vertices corresponding to the faces incident to $e$ (i.e., one of the
endpoints of the edge in the dual of $G$ that corresponds to $e$).  Since we
process each edge in $G$ twice, once at each of its endpoints or twice at its
unique endpoint if it is a self-loop, it follows that the list of edges we
process that are not in $T$, corresponds to the list of edges we cross in a
traversal of $T^*$.
\qed
\end{proof}

\begin{figure}[t]
\centering
\includegraphics[width=0.7\textwidth]{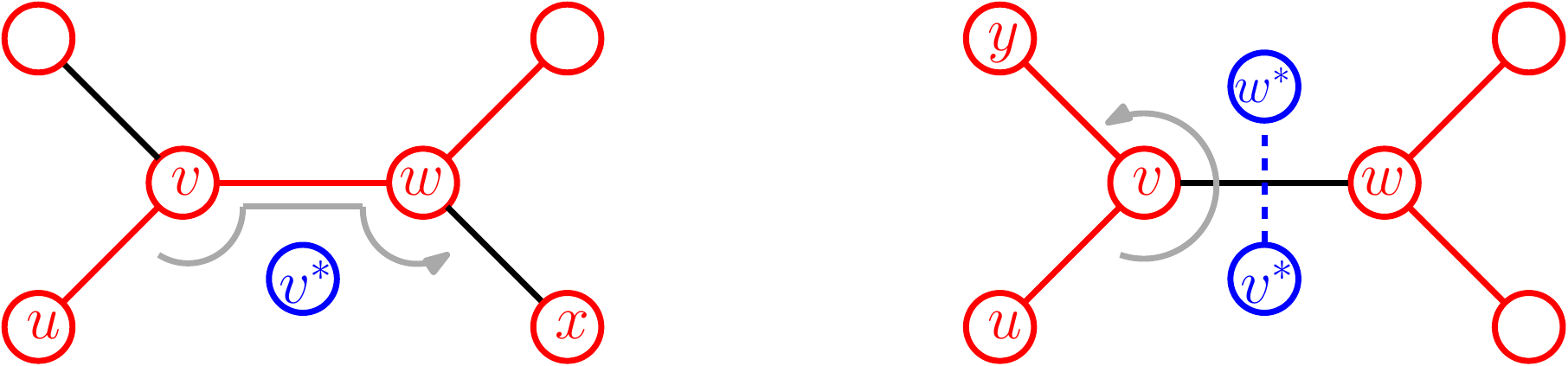}
\caption{{\bf Left:} If we process an edge \((v, w)\) in $T$, then we move to
  $w$ in our traversal of $T$ and the next edge, \((w, x)\) in this case, is
  also incident to the vertex $v^*$ we are visiting in our traversal of $T^*$.
  {\bf Right:} If \((v, w)\) is not in $T$, then in $T^*$ we move from $v^*$ to
  the vertex $w^*$ corresponding to the face on the opposite side of \((v, w)\)
  in $G$.  The next edge, \((v, y)\) in this case, is also incident to $w^*$.} 
\label{fig:induction}
\end{figure}

We process the edges in counter-clockwise order so that the traversals of $T$
and $T^*$ are from left to right and from right to left, respectively;
processing them in clockwise order would reverse those directions.  For example,
for the embedding in Figure~\ref{fig:trees}, if we start the traversal of the
red tree $T$ at vertex 1 and start processing the edges at \((1, 3)\), then we
process them in the order shown at the right of the figure.

\section{Data structure}
\label{sec:structure}

Our extension of Tur\'an's representation of a planar embedding of a connected
planar graph $G$ with $n$ vertices and $m$ edges consists of the following
components, which take \(4 m + o (m)\) bits: 
\begin{itemize}
\item a bitvector \(A [1..2 m]\) in which \(A [i]=1\) if and only if the $i$th
  edge we process in the traversal of $T$ described in Lemma~\ref{lem:trees}, is
  in $T$; 
\item a bitvector \(B [1..2 (n - 1)]\) in which \(B [i]=0\) if and only if the
  $i$th time we process an edge in $T$ during the traversal, is the first time
  we process that edge; 
\item a bitvector \(B^* [1..2 (m - n + 1)]\) in which \(B^* [i]=0\) if and only if
  the $i$th time we process an edge not in $T$ during the traversal, is
  the first time we process that edge. 
\end{itemize}
Notice $B$ encodes the balanced-parentheses representation of $T$, except that it
lacks the leading 0 and trailing 1 encoding the parentheses for the root.  By
Lemma~\ref{lem:trees}, $B^*$ encodes the balanced-parentheses representation of
a traversal of the spanning tree $T^*$ of the dual of $G$ complementary to $T$
(the right-to-left traversal of $T^*$, in fact), except that it also lacks the
leading 0 and trailing 1 encoding the parentheses for the root.  Therefore,
since $B$ and $B^*$ encode forests, we can support $\match$ and $\parent$ with
them. 

To build $A$, $B$ and $B^*$ given the embedding of $G$ and $T$, we traverse $T$
as in Lemma~\ref{lem:trees}.  Whenever we process an edge, if it is in $T$ then
we append a 1 to $A$ and append the edge to a list $L$; otherwise, we append a 0
to $A$ and append the edge to another list $L^*$.  When we have finished the
traversal, we replace each edge in $L$ or $L^*$ by a 0 if it is the first
occurrence of that edge in that list, and by a 1 if it is the second occurrence;
this turns $L$ and $L^*$ into $B$ and $B^*$, respectively.  For the example
shown in Figure~\ref{fig:trees}, $L$ and $L^*$ eventually contain the edges
shown in the columns labelled $T$ and \(G - T\), respectively, in the table on
the on the right side of the figure, and 
\begin{eqnarray*}
A [1..28] & = & 0110110101110010110100010100 \\
B [1..14] & = & 00101100110011 \\
B^* [1..14] & = & 01001001110101\,.
\end{eqnarray*}

We identify each vertex $v$ in $G$ by its pre-order rank in our traversal of
$T$. We say that, while we visit $v$, we process all the edges that lead from
$v$ to other nodes $w$. Note that each edge $(v,w)$ is processed twice, while
visiting $v$ and while visiting $w$, but these correspond to two distinct
positions in our traversal. Consider the following queries: 

\begin{description}
\item[\(\first (v)\):] return $i$ such that the first edge we process while
  visiting $v$ is the $i$th we process during our traversal; 
\item[\(\last (v)\):] return $i$ such that the last edge we process while
  visiting $v$ is the $i$th we process during our traversal; 
\item[\(\next (i)\):] return $j$ such that if we are visiting $v$ when we
  process the $i$th edge during our traversal, then the next edge we process
when visiting $v$, in counter-clockwise order, is the one we process $j$th; 
\item[\(\prev (i)\):] return $j$ such that if we are visiting $v$ when we
  process the $i$th edge during our traversal, then the previous edge we 
processed when visiting $v$, in counter-clockwise order, is the one we process 
$j$th; 
\item[\(\mate (i)\):] return $j$ such that we process the same edge $i$th and
  $j$th during our traversal; 
\item[\(\vertex (i)\):] return the vertex $v$ such that we are visiting $v$ when
  we process the $i$th edge during our traversal. 
\end{description}
With these it is straightforward to reenact our traversal of $T$ and recover the
embedding of $G$.  For example, with the following queries we can list the edges
incident to the root of $T$ in Figure~\ref{fig:trees} and determine whether they
are in $T$: 
\[\begin{array}{l@{\hspace{3ex}}l@{\hspace{3ex}}l@{\hspace{3ex}}l}
\first (1) = 1 & \mate (1) = 4 & \vertex (4) = 3 & A [1] = 0 \\
\next (1) = 2 & \mate (2) = 10 & \vertex (10) = 2 & A [2] = 1 \\
\next (2) = 11 & \mate (11) = 17 & \vertex (17) = 5 & A [11] = 1 \\
\next (11) = 18 & \mate (18) = 26 & \vertex (26) = 7 & A [18] = 1 \,.
\end{array}\]
To see why we can recover the embedding from the traversal, consider that if we
have already correctly embedded the first $i$ edges processed in the traversal,
then we can embed the \((i + 1)\)th correctly given its endpoints and its rank
in the counter-clockwise order at those vertices. Queries $\last$ and $\prev$ 
are superfluous for this task, but they allow traversing the neighbours of a 
node in clockwise order.

\subsection{Implementing the basic queries}

We now explain our constant-time implementations of
$\first$, $\next$, $\prev$, $\mate$ and $\vertex$.  

\paragraph{Query $\first$}
If \(m = 0\)
then \(\first (v)\) is undefined, which we indicate by returning 0.  Otherwise,
we first process an edge at $v$ immediately after first arriving at $v$.  Since
we identify $v$ with its pre-order rank in our traversal of $T$ and $B$ lacks
the opening parenthesis for the root, while first arriving at any vertex $v$
other than the root we write the \((v - 1)\)th 0 in $B$ and, thus, the
\(B.\select_0 (v - 1)\)th 1 in $A$.  If $v$ is the root then \(\first (v) = 1\)
and so, since \(\select_x (0) = 0\), this case is also handled by the formula
below: 
\[\first (v) = \left\{ \begin{array}{l@{\hspace{1.5ex}}l}
	A.\select_1 (B.\select_0 (v - 1)) + 1 & \mbox{if \(m \geq 1\)} \\
	0 & \mbox{otherwise.}
\end{array} \right.\]
In our example,
\[\first (5)
= A.\select_1 (B.\select_0 (4)) + 1
= A.\select_1 (7) + 1
= 12\,\]
and indeed the twelfth edge we process, \((5, 6)\), is the first one we process
at vertex 5. Note that the formula works for nodes with only one edge too.

\paragraph{Query $\last$}
The logic of $\last$ is similar to that of $\first$; we must locate the 
closing parenthesis that represents $v$ in $T$.

\[\last (v) = \left\{ \begin{array}{l@{\hspace{1.5ex}}l}
 	A.\select_1(B.\match(B.\select_0(v-1))) & \mbox{if \(m \geq 1\)} \\
	0 & \mbox{otherwise.} \\
\end{array} \right.\]

\paragraph{Query $\next$}
If the $i$th edge we process is the last edge we process at a vertex $v$ then
\(\next (i)\) is undefined, which we again indicate by returning 0.  This is the
case when \(i = 2 m\), or \(A [i] = 1\) and \(B [A.\rank_1 (i)] = 1\).
  Otherwise, if the $i$th edge we process is not in $T$, then \(A [i] = 0\), and
  we process the next edge at $v$ one time step later.  Finally, if the
  $i$th edge $e$ we process is in $T$ and not the last one we process at $v$,
  then we next process an edge at $v$ immediately after returning to $v$ by
  processing $e$ again at time \(\mate (i)\).  This is the case when \(A [i] =
  1\) and \(B [A.\rank_1 (i)] = 0\).  In other words, 
\[\next (i) = \left\{ \begin{array}{l@{\hspace{1.5ex}}l}
	i + 1 & \mbox{if $i < 2m$ and \(A [i] = 0\)} \\
	\mate (i) + 1 & \mbox{if $i < 2m$ and \(A [i] = 1\) and \(B [A.\rank_1 (i)] = 0\)} \\
	0 & \mbox{otherwise.}
\end{array} \right.\]
In our example, since \(A [12] = 1\), \(B [A.\rank_1 (12)] = B [8] = 0\), the
twelfth edge we process is \((5, 6)\) and it is also the fifteenth edge we
process, 
\[\next (12)
= \mate (12) + 1
= 16\,,\]
and indeed the second edge we process at vertex 5 is \((5, 7)\).

\paragraph{Query $\prev$}
The logic for $\prev$ is similar to that of $\next$; we only need to consider
that, once we move one position backwards, we might arrive at a closing
parenthesis. The formula follows.
\[\prev (i) = \left\{ \begin{array}{l@{\hspace{1.5ex}}l}
	i - 1 & \mbox{if $i > 1$ and \(A [i-1] = 0\)} \\
	\mate (i-1) & \mbox{if $i > 1$ and \(A [i-1] = 1\) and \(B [A.\rank_1 (i-1)] = 1\)} \\
	0 & \mbox{otherwise.}
\end{array} \right.\]

\paragraph{Query $\mate$}
To implement \(\mate (i)\), we check \(A [i]\) to determine whether we wrote a 
bit in $B$ or in $B^*$ while processing the $i$th edge, and use $\rank$ on $A$
to find that bit in the corresponding sequence. 
We then use $\match$ to find the bit encoding the matching parenthesis, and 
finally use $\select$ on $A$ to find where we wrote in $A$ 
that matching bit.  Therefore, 
\[\mate (i) = \left\{ \begin{array}{l@{\hspace{1.5ex}}l}
	A.\select_0 (B^*.\match (A.\rank_0 (i))) & \mbox{if \(A [i] = 0\)} \\
	A.\select_1 (B.\match (A.\rank_1 (i))) & \mbox{otherwise.}
\end{array} \right.\]
To compute \(\mate (12)\) for our example, since \(A [12] = 1\),
\begin{eqnarray*}
\lefteqn{\mate (12)}\\
& = & A.\select_1 (B.\match (A.\rank_1 (12))) \\
& = & A.\select_1 (B.\match (8)) \\
& = & A.\select_1 (9) \\
& = & 15\,.
\end{eqnarray*}

\paragraph{Query $\vertex$}
Suppose the $i$th edge $e$ we process is not in $T$ and we process it at vertex
$v$.  If the preceding time we processed an edge in $T$ was the first time we
processed that edge, we then wrote a 0 in $B$, encoding the opening parenthesis
for $v$; otherwise, we then wrote a 1 in $B$, encoding the closing parenthesis
for one of $v$'s children.  Now suppose $e$ is in $T$.  If that is the first
time we process $e$, we move to the other endpoint $w$ of $e$ --- which is a
child of $v$ --- and write a 0 in $B$, encoding the opening parenthesis for $w$.
If it is the second time we process $e$, then we write a 1 in $B$, encoding the
closing parenthesis for $v$ itself.  Therefore, 
\[\vertex (i) = \left\{ \begin{array}{l}
	B.\rank_0 (A.\rank_1 (i)) + 1\\\hspace{3ex} \mbox{if \(A [i] = 0\) and
          \(B [A.\rank_1 (i)] = 0\)} \\[1ex] 
	B.\parent (B.\rank_0 (B.\match (A.\rank_1 (i)))) + 1\\\hspace{3ex}
        \mbox{if \(A [i] = 0\) and \(B [A.\rank_1 (i)] = 1\)} \\[1ex] 
	B.\parent (B.\rank_0 (A.\rank_1 (i))) + 1\\\hspace{3ex} \mbox{if \(A [i]
          = 1\) and \(B [A.\rank_1 (i)] = 0\)} \\[1ex] 
	B.\rank_0 (B.\match(A.\rank_1 (i))) + 1\\\hspace{3ex} \mbox{otherwise.}
\end{array} \right.\]
In our example, since \(A [16] = 0\) and \(B [A.\rank_1 (16)] = B [9] = 1\),
\begin{eqnarray*}
\lefteqn{\vertex (16)} \\
& = & B.\parent (B.\rank_0 (B.\match (A.\rank_1 (16)))) + 1 \\
& = & B.\parent (B.\rank_0 (B.\match (9))) + 1 \\
& = & B.\parent (B.\rank_0 (8)) + 1 \\
& = & B.\parent (5) + 1 \\
& = & 5\,,
\end{eqnarray*}
and indeed we process the sixteenth edge \((5, 7)\) while visiting 5.

We remind the reader that since $B$ lacks parentheses for the root of $T$,
\(B.\parent (5)\) refers to the parent of the fifth vertex in an in-order
traversal of $T$ not including the root, i.e., the parent vertex 5 of vertex 6.
Adding 1 includes the root in the traversal, so the final answer correctly
refers to vertex 5.  The lack of parentheses for the root also means that, e.g.,
\(B.\parent (4)\) refers to the parent of vertex 5 and returns 0 because vertex
5 is the root of its own tree in the forest encoded by $B$, without vertex 1.
Adding 1 to that 0 also correctly turns the final value into 1, the in-order
rank of the root.  Of course, we have the option of prepending and appending
bits to $A$, $B$ and $B^*$ to represent the roots of $T$ and $T^*$, but that
slightly confuses the relationship between the positions of the bits and the
time steps at which we process edges. 

We also note that, if we do not require that node identifiers are precisely
preorder ranks in $T$, then we can use the positions of their 0 in $B$ as their
identifiers. This removes the need for using $B.\rank_0$ and $B.\select_0$
in all the formulas that convert between node identifiers and positions in $T$.

\subsection{More complex queries}

We can define more complex queries on top of the basic ones. For
example, we give the pseudocode of three queries: $\degree(v)$ returns the 
number of neighbours of vertex $v$; $\listing(v)$ returns the list of neighbours of 
vertex $v$, in counter-clockwise order; $\face(e)$ returns the list of vertices, in 
clockwise order, of one of the face where the edge $e$ belongs. We also support the
other order (clockwise or counter-clockwise, or the other face where $e$ belongs) by using $\last$ and $\prev$ instead of $\first$ and $\next$.

\begin{figure}[t]
  \centering
  \begin{minipage}[t]{0.31\textwidth}
  \begin{function}[H]
    \footnotesize
    \DontPrintSemicolon
    \SetVlineSkip{0.5ex}
    \LinesNumbered
    \SetKwFunction{Next}{next}
    \SetKwFunction{First}{first}
    \SetKwInOut{Input}{Input}
    \SetKw{Return}{return}
    \Input{node $v$}
    \BlankLine
    $d \asgn 0$\;
    $edg \asgn \First(v)$\;
    \While{$edg \not= 0$} {
      $edg \asgn \Next(edg)$\;
      $d \asgn d + 1$\;
    }
    \Return $d$
    \caption{degree()}
    \label{func:degree}
  \end{function}
\end{minipage}%
\hspace{.3em}%
\begin{minipage}[t]{0.33\textwidth}
  \begin{function}[H]
    \DontPrintSemicolon
    \SetVlineSkip{0.5ex}
    \footnotesize
    \LinesNumbered
    \SetKwFunction{Next}{next}
    \SetKwFunction{First}{first}
    \SetKwFunction{Mate}{mate}
    \SetKwFunction{Vertex}{vertex}
    \SetKwInOut{Input}{Input}
    \SetKw{Out}{output}
    \Input{node $v$}
    \BlankLine
    $edg \asgn \First(v)$\;
    \While{$edg \not= 0$} {
      $mt \asgn \Mate(edg)$\;
      \Out $\Vertex(mt)$\;
      $edg \asgn \Next(edg)$\;
    }
    \caption{listing()}
    \label{func:listing}
  \end{function}
\end{minipage}
\hspace{.3em}%
\begin{minipage}[t]{0.33\textwidth}
  \begin{function}[H]
    \DontPrintSemicolon
    \SetVlineSkip{0.5ex}
    \LinesNumbered
    \footnotesize
    \SetKwFunction{Next}{next}
    \SetKwFunction{First}{first}
    \SetKwFunction{Mate}{mate}
    \SetKwFunction{Vertex}{vertex}
    \SetKwInOut{Input}{Input}
    \SetKw{KwOr}{or}
    \SetKw{KwTrue}{true}
    \SetKw{KwFalse}{false}
    \SetKw{Out}{output}
    \Input{edge $e$}
    \BlankLine
    $edg \asgn e$,
    $\mathit{fst} \asgn \KwTrue$\;
    \While{$edg \neq e$ \KwOr $\mathit{fst}$} {
      $\mathit{fst} \asgn \KwFalse$\;
      $mt \asgn \Mate(edg)$\;
      \Out $\Vertex(mt)$\;
      $edg \asgn \Next(mt)$\;
    }
    \caption{face()}
    \label{func:face}
  \end{function}
\end{minipage}
\end{figure}

Queries $\listing(v)$ and $\face(e)$ are implemented in optimal time, that is, 
$\Oh{1}$ per returned element. Instead, $\degree(v)$ requires time 
$\Oh{\degree(v)}$. We can also determine $\neighbour(u,v)$, that is, whether two
vertices $u$ and $v$ 
are neighbours, by listing the neighbours of each in interleaved form, in time
$\mathcal{O}(\min (\degree (u),$ $\degree (v)))$. These times are not so 
satisfactory compared with the $\Oh{1}$ achieved by other representations 
\cite{MR01,ChiangLinLu2005} to compute $\neighbour(u,v)$ and $\degree(v)$.

For $\degree(v)$, we can get arbitrarily close to constant time by adding $o(m)$
further bits to our representation, that is, we can solve the query in time 
$\Oh{f(m)}$
for any given function \(f(m) \in \omega (1)\). To do this, we store a bitvector
$D[1..n]$ marking with 1s the (at most) \(m/f(m) = o(m)\) vertices with degree 
at least \(f (m)\), which takes $nH(m/(nf(m)))+o(n)=\Oh{(m/f(m))\log (nf(m)/m)}
+o(n)= o(m)$ bits by using a sparse bitvector representation \cite{RRR07} 
(recall that $G$ is connected, so $m \ge n-1$). We also store a second 
bitvector $E[1..2m]$ where we append, for each $D[v]=1$, 
$\degree(v)-1$ copies of 0s followed by a 1. Since $E$ has $m/f(m)$ 1s,
it can also be stored as a sparse bitvector using $\Oh{(m/f(m))\log f(m)}+o(m)
= o(m)$ bits. Therefore, if $D[v]=1$, its degree is obtained in constant
time with $\select_1(E,r)-\select_1(E,r-1)$, where $r=\rank_1(D,v)$. If, 
instead, $D[v]=0$, then we know that $\degree(v) < f(m)$ and thus we apply the
procedure that sequentially counts the neighbours, in time $\Oh{f(m)}$.

We can use a similar idea, albeit more complex, to answer $\neighbour(u,v)$ 
queries in time $\Oh{f(m)}$, for any $f(m) = \omega(\log m)$.
We consider the graph induced by the $\Oh{m/f(m)} = o(m/\log m) $ nodes with
degree $f(m)$ or higher and
eliminate multi-edges and self-loops. The resulting graph $G'$ is simple and 
still planar, so it has average degree less than 6 and thus $o(m/\log m)$
edges. We represent $G'$ in classical adjacency-list form, with the nodes 
inside each list sorted by increasing node identifier. This requires $o(m)$ 
bits in total. To 
solve $\neighbour(u,v)$ in $G'$, we can use binary search for $v$ in the list of $u$
in time $\Oh{\log m}= o(f(m))$.
To answer \(\neighbour (u, v)\) on $G$, we check whether either $u$ or $v$ is
low-degree (assuming we mark low-degree nodes in a bitvector $D'$ analogous to
$D$) and, if so, list its neighbours in $\Oh{f(m)}$ time. If not, we translate
nodes $u$ and $v$ to their corresponding nodes $u'=\rank_1(D',u)$ and
$v'=\rank_1(D',v)$ in $G'$ and query $G'$ in time $o(f(m))$.

The following theorem summarizes the results of this section.

\begin{theorem}
\label{thm:main}
We can store a given planar embedding of a connected planar graph $G$ with $m$
edges in \(4 m + o (m)\) bits such that later, given a vertex $v$, we can list 
the edges incident to $v$ in clockwise or counter-clockwise order, even if we 
are given a particular starting edge incident to $v$, using constant time per 
edge. We can also traverse the edges limiting a face in constant time per 
edge. Further, we can find a vertex's degree in $\Oh{f (m)}$ time for any
given function \(f (m) \in \omega (1)\), and determine whether two
vertices are neighbours in $\Oh{f (m)}$ time for any
given function \(f (m) \in \omega (\log m)\).
\end{theorem}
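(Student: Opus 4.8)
The plan is to assemble the theorem from the pieces already developed in Section~\ref{sec:structure}, verifying that each claimed bound follows from an explicit construction rather than proving anything genuinely new. First I would establish the space bound: the three bitvectors $A[1..2m]$, $B[1..2(n-1)]$ and $B^*[1..2(m-n+1)]$ together have total length $4m-2$, and equipping each with the standard $o(\cdot)$-bit structures for $\rank$, $\select$, $\match$ and $\parent$ \cite{Jacobson1989,Cla96,Mun96,MR01,GRRR06,NS14} adds only $o(m)$ bits, so the base representation occupies $4m+o(m)$ bits. Since $G$ is connected we have $n-1 \le m$, so writing the bounds in terms of $m$ alone is legitimate. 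The auxiliary structures $D$, $E$ for $\degree$ and $D'$, $G'$ for $\neighbour$ each occupy $o(m)$ bits by the sparse-bitvector and planarity arguments given just above the theorem statement, so they do not disturb the $4m+o(m)$ total.

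Next I would justify the navigation claims. By Lemma~\ref{lem:trees}, $B$ and $B^*$ encode balanced-parentheses traversals of $T$ and of the complementary dual tree $T^*$ (up to the missing root parentheses), so the basic queries $\first$, $\last$, $\next$, $\prev$, $\mate$ and $\vertex$ are all correct and run in $\Oh{1}$ time by the formulas derived in the preceding subsection (each is a constant number of $\rank$/$\select$/$\match$/$\parent$ calls). Listing the edges incident to $v$ in counter-clockwise order then amounts to the loop $edg \asgn \first(v)$, repeatedly $edg \asgn \next(edg)$ until $0$ (Function~\ref{func:listing}), emitting $\vertex(\mate(edg))$; starting at a prescribed incident edge is handled by seeding the loop at that edge's position instead of at $\first(v)$, and the clockwise order is obtained by replacing $\first,\next$ with $\last,\prev$. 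Each step is $\Oh{1}$, so the cost is constant per reported edge. Traversing a face is the same argument applied to $T^*$ via $\next(\mate(\cdot))$ (Function~\ref{func:face}): by Lemma~\ref{lem:trees} this walks the edges bounding one face, again $\Oh{1}$ per edge.

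For $\degree(v)$ I would argue as follows: if $D[v]=1$ the degree is read off in $\Oh{1}$ time as $\select_1(E,r)-\select_1(E,r-1)$ with $r=\rank_1(D,v)$; otherwise $\degree(v)<f(m)$ and the sequential count via Function~\ref{func:degree} takes $\Oh{f(m)}$ time. For $\neighbour(u,v)$: if either endpoint is low-degree (marked in $D'$) we list its $<f(m)$ neighbours and scan, in $\Oh{f(m)}$ time; otherwise both lie in the simple planar graph $G'$, which has $o(m/\log m)$ edges, so $\neighbour$ reduces to a binary search in an adjacency list of $G'$ costing $\Oh{\log m}=o(f(m))$ time. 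Combining the cases gives the stated $\Oh{f(m)}$ bounds for $f(m)\in\omega(1)$ and $f(m)\in\omega(\log m)$ respectively.

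I do not expect a serious obstacle, since the theorem is a summary of worked-out constructions; the only point requiring care is bookkeeping around the missing root parentheses in $B$ and $B^*$ (the $\pm 1$ shifts and the ``$+1$'' corrections in $\first$ and $\vertex$), which must be checked to confirm that node identifiers and time steps line up exactly as claimed — but this is exactly what the examples in the preceding subsection verify, so I would simply cite that discussion.
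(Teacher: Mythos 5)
Your proposal is correct and follows essentially the same route as the paper: the theorem is explicitly a summary of the constructions in Section~\ref{sec:structure}, and you assemble exactly those pieces (the $4m$-bit bitvectors plus $o(m)$ support, the constant-time basic queries justified by Lemma~\ref{lem:trees}, and the $D,E$ and $G'$ auxiliary structures for $\degree$ and $\neighbour$). The only nit is arithmetic: the three bitvectors total $2m+2(n-1)+2(m-n+1)=4m$ bits, not $4m-2$, which of course does not affect the $4m+o(m)$ bound.
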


\subsection{Reducing space on simple planar graphs}

Chiang {\it et al.}~\cite{ChiangLinLu2005} use $2m+3n+o(m)$ bits to represent 
planar graphs without self-loops, which can be more than the $4m+o(m)$ bits 
used in our representation. However, if $G$ is simple (i.e., has no loops nor
multiple edges), their representation requires only $2m+2n+o(m) \le 4m+o(m)$ 
bits. We remind the reader that this representation can handle any simple planar graph,
but does not always respect the given embedding, so they cannot represent
arbitrary embeddings.

We show that, if there are no self-loops, our representation can use less 
than $4m+o(m)$ bits, by exploiting some redundancy in our representation
and without changing the main scheme. Assume we represent a single sequence 
$S[1..2m]$ over an alphabet of four symbols, $\Sigma = \{ (, ), [, ] \}$, 
that replaces $A$, $B$, and $B^*$.
That is, the parentheses are the 0s and 1s in $B$, the brackets are the 0s and
1s and $B^*$, and $A$ corresponds to whether the symbols are parentheses or
brackets. In our running example, the sequence is
\[ S[1..2m] ~~=~~ [~(~(~]~)~(~[~)~[~)~(~(~]~[~)~[~)~(~]~(~]~]~[~)~]~)~[~].
\]
The zeroth-order entropy of $S$ is defined as 
$H_0(S) = \sum_{c \in \Sigma} \frac{m_c}{2m}\log \frac{2m}{m_c}$, where $c$ 
occurs $m_c$ times in $S$. The $k$th-order entropy, for any $k>0$, is defined as
$H_k(S) = \sum_{C \in \Sigma^k} \frac{|S_C|}{2m} H_0(S_C)$, where $S_C$ is the
string formed by the symbols that follow the context $C$ in $S$ (assume $S$ 
is circular for simplicity, so that $S[1]$ follows $S[2m]$).

Ferragina and Venturini \cite{FV07} show how to store a string $S$ within
$|S|H_k(S) + o(|S|\log|\Sigma|)$ bits, for any $k=o(\log_{|\Sigma|} |S|)$,
so that any substring of length $\Oh{\log |S|}$ can be extracted in constant 
time. We use their result to store $S$ in $2mH_1(S)+o(m)$ bits. Instead of a 
structure on parentheses on bitvector $B$ and another on bitvector $B^*$, we 
build both parentheses structures on top of sequence $S$. Both are similar to
the original $o(m)$-bit structure of Navarro and Sadakane \cite{NS14}, only that
the structure built to navigate parentheses ignores the bracket symbols,
and vice versa (a similar arrangement is described by 
Navarro~\cite[pp.~311--315]{Navarro2016}). The only
changes are that each symbol uses 2 bits instead of 1, that there are two 
symbols that do not change the ``excess'' count (number of opening minus
closing parentheses up to some position), and that in order to extract a chunk
of $\Theta(\log m)$ symbols, we use the extraction method of Ferragina and 
Venturini \cite{FV07}. A $\rank$/$\select$ functionality on top of $A$ is also
easily provided on top of $S$, by using the same $o(m)$-bit structures 
\cite{Cla96,Mun96} and interpreting both parentheses as 1s and both brackets 
as 0s.
Therefore, with $o(m)$ further bits, we provide the necessary functionality on
top of the $H_1(S)$ bits needed to encode $S$.

This entropy gives precisely 2 bits per symbol (and thus $4m$ bits in total) 
for general planar embeddings, but if there are no self-loops, then the 
substring ``[~]'' cannot appear in $S$ (other longer strings cannot appear
either, but we would need a higher-entropy model to capture them). An upper
bound to the first-order entropy when this substring is forbidden is obtained
by noticing that we can have only 3 symbols, instead of 4, following an
opening bracket; therefore we can encode $S$ using $n\lg 4 + n\lg 4 +
(m-n)\lg 3 + (m-n)\lg 4 = m \lg 12 + n \lg(4/3) \approx 3.58 m + 0.42 n$. This
is still $4m$ in the worst case.
To obtain a nontrivial bound in terms of $m$, we calculate the exact
first-order entropy of $S$ when substring ``[~]'' is forbidden. 

Let us use the names $op=($,
$cp=)$, $ob=[$, and $cb=]$. Let us call $x_y$ number of symbols $y$
following a symbol $x$ in $S$; for example $op_{ob}$ is the number of opening
brackets following opening parentheses, that is, the number of occurrences of
substring ``(~['' in $S$. It must then hold that 
$\sum op_* = \sum cp_* = n$ and $\sum ob_* = \sum cb_* = m-n$. It also holds
$\sum *_{op} = \sum *_{cp} = n$ and $\sum *_{ob} = \sum *_{cb} = m-n$.
The system of restrictions must be satisfied while maximizing
$$2m H_1(S) ~=~ n H(op_*) + n H(cp_*) + (m-n) H(ob_*) + (m-n) H(cb_*),$$
where
$H(x_1,\ldots,x_4) = \sum \frac{x_i}{x}\log\frac{x}{x_i}$ and $x=\sum x_i$.
Forbidding self-loops implies the additional restriction $ob_{cb} = 0$.

We solve the optimization problem with a combination of algebraic and numeric 
computation, using Maple and C, up to 4 significant digits. We find that the 
entropy is maximized at a value slightly below $3.8 m$.%
\footnote{The maximum is $3.7999 m$, found for $m = 1.731 n$, 
$op_{op} = cp_{op} = 0.2683 n$, $ob_{op} = 0.2679 n$, $cb_{op} = 0.1955 n$, 
$op_{cp} = cp_{cp} = 0.2677 n$, $ob_{cp} = 0.2673 n$, $cb_{cp} = 0.1974 n$, 
$op_{ob} = cp_{ob} = 0.1961 n$, $ob_{ob} = 0.1958 n$, $cb_{ob} = 0.1429 n$, 
$op_{cb} = cp_{cb} = 0.2679 n$, $ob_{cb} = 0, cb_{cb} = 0.1952 n$.}
Therefore, the resulting space with no self-loops and using the described
compressed representation can be bounded by $3.8 m + o(m)$ bits.
Simple graphs have no self-loops and no multiple-edges, but this second
restriction translates into longer forbidden substrings, whose effect is harder 
to analyze.

We remind the reader that the representation of Keeler and Westbrook 
\cite{KeelerWestbrook1995}, on the other hand, achieves $m\lg 12 \approx 3.58 m$
bits when no self-loops (or, alternatively, no degree-one nodes) are permitted,
yet it does not support queries. When neither self-loops nor degree-one nodes are permitted,
they reach $3m$ bits. In this case, both ``[~]'' and ``(~)'' are forbidden
strings. While we have not been able to compute the exact first-order entropy
in this case, this must be at most $n \lg 3 + n \lg 4 + (m-n)\lg 3 + (m-n)\lg 4
= m\lg 12 \approx 3.58m$, which is obtained by using $\lg 4$ bits to encode
the symbol that follows a closing bracket or parenthesis, and $\lg 3$ bits to
encode the symbol that follows an opening bracket or parenthesis.

We note that these space improvements can also be applied on top of the
representation of Chiang {\it et al.}~\cite{ChiangLinLu2005} since, when 
encoding a simple graph, the difference between both representations is that
they use a particular spanning tree (which may also force a particular embedding).


\subsection{Unconnected planar graphs}

Our representation can be easily extended to unconnected planar graphs, because
our parentheses representations can immediately be extended to handle forests
instead of just individual trees. To handle an unconnected planar graph, 
we first find
all the connected components of the graph and then compute an
arbitrary spanning tree for each connected component. Then, we
construct the binary sequences: the sequence $B$ will represent the
{\em forest} of the spanning trees, concatenating all the
balanced-parentheses representations; the sequence $B^*$ will
represent the complementary spanning tree of the dual of the graph.
Finally, sequence $A$ indicates the interleaving of the sequences $B$ 
and $B^*$. We visit the connected components in arbitrary order.

Note that, in the case of connected planar graphs, our navigation queries and
the fact that the first edge we list is adjacent to the external face, are
sufficient to recover the embedding. This is not the case if the graph has
$k>1$ connected components. Concretely, some components may be embedded 
inside faces of other components, whereas our arrangement assumes that all the
connected components lie on the outer face (our navigation queries cannot
distinguish between those cases). 

To recover the embedding we might add $k-1$ edges to the spanning tree, so
that all the connected components lying in a single face are threaded through
a node in their frontier, and the first one is linked to a node on the 
containing face. Therefore the total length of $A$
will be $2(m+k-1)$ and the length of $B$ will be $2(n+k-2)$. The fake edges
will be marked in a bitvector $K[1..n+k-2]$ indexed by preorder value. Since 
$K$ contains $k-1$ 1s, it can be encoded in 
$k\lg(n/k)+\Oh{k}$ bits \cite{OS07}. Since $n \le m+k$,
the space of the whole structure can be written in terms of $m$ and $k$ 
as $4m+k\lg(m/k)+\Oh{k}+o(m)$ bits. 

The $k\lg(m/k)+\Oh{k}$ or $k\lg(n/k)+\Oh{k}$
bits to describe the embedding are asymptotically 
optimal: consider a chain of $t$ triangles (delimited with $m=2t+1$ edges)
and $k-1$ isolated nodes (so there are $k$ connected components in total) to 
represent all the ways to distribute $k-1$ balls into $t$ bins. This requires 
$\lg { k+t-2 \choose k-1} = k\lg(t/k) + \Oh{k} = k\lg(m/k) + \Oh{k}$ bits with 
any encoding. This is also $k\lg(n/k)+\Oh{k}$ bits, since this graph has 
$n=2t+k$ nodes.

We use $K$ to avoid listing fake edges in any of the traversal operations.
The fake edges increase the degree of a node by a constant factor: a node may
have one fake edge per face it participates in, which at most doubles its
degree. Further, a node in the frontier of its component may have two extra
fake edges threading it with other connected components. Therefore, the time
complexity of the navigation operations is not affected.

The fake edges may, in addition, be useful for a more ambitious $\face$
operation that takes into account the actual embedding, where a face is 
surrounded by a sequence of edges but is also limited by the frontier edges 
of the connected components it has inside. To find all those edges, we also
traverse the fake edges in the $\face$ traversal, yet without listing them. 
The fake edges will lead us to the other connected components that are 
contained and/or surround the face we are listing.

\section{Parallel construction}
\label{sec:parallel}

In this section we discuss the parallel construction of our extension of 
Tur\'an's representation. Since the representation is based on spanning trees 
and tree traversals, we can borrow ideas of well-known parallel algorithms, 
such as parallel Euler Tour traversal or parallel computation of spanning trees.

We assume that a tree $T$ is represented with adjacency lists. Such 
representation consists of an array of nodes $V_{T}[1..n]$, and an array of 
edges $E_{T}[1..2n-2]$. Each node $v\in V_{T}$ stores two indices in $E_{T}$, 
$v.\mathit{first}$ and $v.\mathit{last}$, delimiting the adjacency list of $v$,
which starts with $v$'s parent edge (except the root) and is sorted 
counter-clockwise around $v$. The number of children of $v$ is then 
$v.\mathit{last}-v.\mathit{first}$ (plus 1 for the root). Each edge 
$e\in E_{T}$ has three fields: $e.src$ and $e.tgt$ are the positions
in $V_T$ of its source and target vertices, and $e.mat$ is the position in 
$E_{T}$ of the mate edge $e'$ of $e$, where $e'.src = e.tgt$ and $e'.tgt = 
e.src$. Our representation of
graphs is similar, with the exception that the concept of \emph{parent} of a
vertex is not valid in graphs; therefore the first edge in the adjacency list of
a vertex $v$ cannot be interpreted as $v$'s parent edge.

\subsection{Parallel construction of compact planar embeddings}
\label{subsuc:paralgo}

We will first assume that the input consists of a connected planar graph 
embedding $G=(V_G,E_G)$ and a spanning tree $T=(V_T,E_T)$ of $G$, together with
an array $C$ that stores the number of edges of $G\setminus T$ between any 
two consecutive edges in $T$, in counter-clockwise order. In Section~\ref{subsec:spanning} we will explain how to obtain $T$ and $C$ in parallel. 

With the spanning tree, we construct the bitvectors $A$, $B$, and $B^*$ by 
performing an Euler Tour over $T$. During the tour, by writing a
$0$ for each forward (parent to child) edge and a $1$ for each backward (child
to parent) edge, we obtain the bitvector $B$. By reading in $C$ the number of 
edges of $G\setminus T$ between two consecutive edges of $T$, representing
these edges with 0s and the edges of $T$ with 1s, we obtain the bitvector $A$. 
Finally, by using the previous Euler Tour and the array $C$ we can obtain the 
bitvector $B^*$, by finding out which is the first (0) and which is the second
(1) occurrence of each edge. 

Algorithm \ref{algo:parAlgo} gives the detailed pseudocode. It works in the following steps:

\begin{algorithm2e}[t!]
  \footnotesize
  \SetKwInOut{Input}{Input}
  \SetKwInOut{Output}{Output}
  \SetKwFor{PFor}{parfor}{do}{end}
  \SetKwFunction{parListRanking}{parallelListRanking}
  \SetKwFunction{createRS}{createRankSelect}
  \SetKwFunction{createBP}{createBP}
  \LinesNumbered
  \DontPrintSemicolon
  \SetVlineSkip{0.5ex}
  \SetCommentSty{textit}
    \SetKw{KwOr}{or}
  \Input{A planar graph embedding $G=(V_{G},E_{G})$, a spanning
    tree $T=(V_{T},E_{T})$ of $G$, an array $C$ of
    size $|E_T|$, and the starting vertex $\mathit{init}$.}
  \Output{Bitvectors $A$, $B$ and $B^*$ induced by $G$ and $T$.}
  \BlankLine
  $\mathit{A} \asgn {}$a bitvector of length $|E_{G}|$ initialized with 0s\;
  $\mathit{B} \asgn {}$a bitvector of length $|E_{T}|-2$\;
  $\mathit{B^*} \asgn {}$a bitvector of length $|E_{G}|-|E_{T}|+2$ initialized with 0s\;
  $\mathit{LE} \asgn {}$an array of length $|E_{T}|$\;

  \PFor{$j \asgn 1$ \KwTo $|E_T|$}{

      $\mathit{LE}[j].\mathit{rankA} \asgn C[E_{T}[j].mat]+1$\;
      $\mathit{LE}[j].\mathit{rankB} \asgn 1$\;
      
      \eIf(\tcp*[h]{forward edge}){$E_{T}[j].src=\mathit{init}$ \KwOr
        $E_{T}[j].src.\mathit{first}\neq j$}{
            $\mathit{LE}[j].\mathit{value} \asgn 0$ \tcp*[h]{opening parenthesis}

            \eIf(\tcp*[h]{target is a leaf}){$E_{T}[j].tgt.\mathit{first}=E_{T}[j].tgt.\mathit{last}$} {
                   $\mathit{LE}[j].\mathit{succ} \asgn E_{T}[j].mat$\;
            }(\tcp*[h]{target has children})
            {
                   $\mathit{LE}[j].\mathit{succ} \asgn E_{T}[j].tgt.\mathit{first}+1$\;
            }
      }(\tcp*[h]{backward edge})
      {
            $\mathit{LE}[j].\mathit{value} \asgn 1$ \tcp*[h]{closing parenthesis}

            \eIf(\tcp*[h]{$j$ was the last edge of target, return}){$E_{T}[j].mat = E_T[j].tgt.last$}{ 
                 $\mathit{LE}[j].\mathit{succ} \asgn E_{T}[j].tgt.\mathit{first}$\;
            }(\tcp*[h]{continue with next edge from target})
            {
                $\mathit{LE}[j].\mathit{succ} \asgn E_{T}[j].mat+1$\;
            }
       }
    }
  $\parListRanking(\mathit{LE})$\;
  
  \PFor{$j \asgn 1$ \KwTo $|E_T|$} {
      $A[\mathit{LE}[j].\mathit{rankA}] \asgn 1$ \;
      $B[\mathit{LE}[j].\mathit{rankB}] \asgn \mathit{LE}[j].\mathit{value}$\;
    }

  $\mathit{D}_{pos}, \mathit{D}_{edge} \asgn {}$arrays of length
  $|E_G|$ and $|E_{G}|-|E_{T}|+2$, respectively\;
  \PFor{$j \asgn 1$ \KwTo $|E_T|$} {
      $p \asgn \mathit{LE}[j].\mathit{rankA}-\mathit{LE}[j].\mathit{rankB}$\;
      $base = \mathit{ref}(E_{T}[j].mat)$\;
      $delta \asgn p - base -1$\;
      \PFor{$k \asgn base+1$ \KwTo $base+C[E_{T}[j].mat]$} {
        $\mathit{D}_{pos}[k] \asgn k+delta$\;
        $\mathit{D}_{edge}[k+delta] \asgn k$\;
      }
    }
  \PFor{$j \asgn 1$ \KwTo $|E_{G}|-|E_{T}|+2$} {
      $mat \asgn E_{G}[\mathit{D}_{edge}[j]].mat$\;
      \If{$j > \mathit{D}_{pos}[mat]$} {
        $B^{*}[j] \asgn 1$ \;
      }
    }
  $\createRS(\mathit{A})$, $\createBP(\mathit{B})$, $\createBP(\mathit{B^*})$\;
  \caption{Parallel compact planar embedding algorithm.}
  \label{algo:parAlgo}
\end{algorithm2e}

\begin{enumerate}
\item
In lines 1--4, it initializes the output bitvectors ($A$ and $B^*$ are set to
0s) and creates an
auxiliar array $LE$ that is used to store the traversal of the tree
following the Euler Tour. Each entry of $LE$ represents one traversed edge of
$T$ and stores four fields: {\em value} is $0$ or $1$ depending on whether the
edge is a forward or a backward edge, respectively; {\em succ} is the index in
$LE$ of the next edge in the Euler tour; {\em rankA} is the rank of the edge in
$A$; and {\em rankB} is the rank of the edge in $B$. 
\item
In lines 5--19, the algorithm traverses $T$ to create the Euler Tour.
For each edge $e_j\in E_T$, {\em rankA} is set to $C[E_{T}[j].mat]+1$ and 
{\em rankB} to $1$ (lines 6--7). Those ranks will be used later to compute the 
final positions of the edges in $A$, $B$, and $B^*$. For
each forward edge, a $0$ is written in the corresponding {\em value} field and
the {\em succ} field is connected to the next edge in the Euler Tour. For
backward edges the procedure is similar. Note that all the edges in the 
adjacency list of a node of $T$ are forward edges, except (for non-root nodes)
the first one, which is the parent edge. 
\item
Line 20 computes the final ranks in $A$ and $B$ using a parallel list ranking 
algorithm that adds up the weights from the beginning of the list to each
element. The weights are stored in the fields {\em rankA} and {\em rankB} 
of $LE$. We use the list ranking algorithm of Helman and J\'aj\'a
\cite{Helman2001265}.
\item
Bitvectors $A$ and $B$ are written in lines 21--23. Since
initially all the elements of $A$ are $0$s, it is enough to set to $1$ all the
elements in the {\em rankA} fields. For $B$, the algorithm copies the
content of field {\em value} at position {\em rankB}, for all the elements in
$LE$.
\item
The algorithm now computes the position of each edge of $G\setminus T$ in $B^*$.
That information is implicit in the fields {\em rankA} and {\em rankB} of $LE$ 
(line 26), once the list ranking of step 3 is carried out. For each edge 
$e\in E_T$, the algorithm computes the positions in $B^*$ of the edges of 
$G\setminus T$ that follow, in counter-clockwise order, the mate edge of $e$ 
(lines 27--31). The algorithm uses two auxiliary arrays, $\mathit{D}_{pos}$ and 
$\mathit{D}_{edge}$. Let edge $E_G[j]$ belong to $G\setminus T$. Then 
$\mathit{D}_{pos}[j]$ stores the position of the edge in $B^*$. The array 
$\mathit{D}_{edge}$ is the inverse of $\mathit{D}_{pos}$: $\mathit{D}_{edge}[i]$
is the position of the $i$-th edge of $B^*$ in $E_G$. This step uses function
$\mathit{ref}(e)$, which maps the position $e$ of an edge in $E_T$ to its 
position in $E_G$. This is naturally returned by the spanning tree 
construction, which gives the identity in $G$ of the edges selected for $T$.
\item
In lines 32--35, the algorithm computes whether the edges stored in
$\mathit{D}_{pos}$ are forward or backward edges. For each edge $e$ in 
$G\setminus T$, it compares the positions in $B^*$ of $e$ and
its mate. If the position of $e$ is greater, then $e$ is a backward edge and, 
therefore, is represented with a $1$.
\item
Finally, in line 36 the structures to support operations $\rank$, $\select$, 
$\match$, and $\parent$ are constructed. For the bitvector $A$, the parallel 
algorithm of Labeit {\it et al.}~\cite{LSB17} ($\mathtt{createRankSelect}$) is
used. For $B$ and $B^*$ the parallel algorithm of Ferres {\it et
  al.}~\cite{Fuentes-Sepulveda2017} for balanced parenthesis sequences
($\mathtt{createBP}$) is used.
\end{enumerate}

We have omitted some implementation details for simplicity. For example, the
pseudocode
uses {\bf parfor} throughout, whereas the implementation uses the threads
in a more controlled manner. Line 29, in particular, is more efficiently done
in sequential form. We have also omitted some space optimizations, such as
the reuse of some fields instead of allocating new arrays.

\paragraph{Analysis}

Step 1 initializes the arrays, which requires $T_1=\Oh{m}$ work and 
$T_\infty = \Oh{\log m}$ span (due to the overhead of the implicit {\bf 
parfor}). In step 2, the algorithm traverses the edges of $T$, performing an 
independent computation on each edge. Therefore, with the overhead of the 
{\bf parfor} loop, we obtain $T_1=\Oh{n}$ and $T_\infty=\Oh{\lg n}$ time. Step
3 uses a parallel list ranking algorithm \cite{Helman2001265} over $n$ elements,
which has complexities $T_1=\Oh{n}$ and $T_\infty=\Oh{\lg n}$. Step 4 assigns
the values to $A$ and $B$ independently for each entry, thus we have again
$T_1=\Oh{n}$ and $T_\infty=\Oh{\lg n}$. In step 5, the algorithm traverses all 
the edges in $G\setminus T$. Since the loop in line 29 is also processed in 
parallel, we obtain $T_1=\Oh{m-n}$ and $T_\infty=\Oh{\lg(m-n)}$. 
Similarly to step 4, in step 6 the algorithm sets the entries of 
bitvector $B^*$, which can be done independently for each entry, obtaining times
$T_1=\Oh{m-n}$ and $T_\infty=\Oh{\lg(m-n)}$. Finally, step 7 builds the
rank/select structures in times $T_1=\Oh{m}$ and $T_\infty=\Oh{\lg m}$ 
\cite{LSB17}. The construction of the structures supporting $\match$ and 
$\parent$ over balanced parentheses is constructed in times $T_1=\Oh{m}$ and 
$T_\infty=\Oh{\lg m}$ \cite{Fuentes-Sepulveda2017}.

In addition to the size of the compact data structure, our algorithm uses
$\Oh{m\log m}$ bits for the arrays $LE$, $D_{pos}$ and $D_{edge}$. As said, the
constant is kept low in practice by reusing fields. Notice that the
memory consumption is independent of the number of threads.

\subsection{Structures for degree and neighbour queries}
\label{sec:extrastructs}

Before discussing how to construct the structures to speed up $\degree(v)$ and
$\neighbour(u,v)$ queries, let us discuss the parallel construction of
the sparse bitvector of Raman {\it et al.}~\cite{RRR07}. Let $\ell$ be the
length of the sparse bitvector. Their representation divides the 
bitvector into blocks of length $b = (\log \ell)/2$.
The $i$th block is described as a pair $(c_i,o_i)$, where
$c_i$ corresponds to the number of $1$s inside the block, also known as the {\em
class} of the block, and $o_i$ corresponds to its {\em offset}, an identifier
among all the different blocks sharing the same class. Thus, the bitvector is 
represented as two arrays, $C[1..\lceil \ell/b\rceil]$ and $O[1..\lceil
  \ell/b\rceil]$, where $C[i]=c_i$ and $O[i]=o_i$. We can compute in parallel
each entry of the arrays $C$ and $O$ independently, using linear time
on each block \cite[Sec.~4.1]{Navarro2016}. Thus, we have $\Oh{\ell}$ work and 
$\Oh{\lg (\ell/b)+b}=\Oh{\lg \ell}$ span. In order
to reduce the space consumption of the arrays $C$ and $O$, the entries of the 
arrays are packed into the bits of consecutive machine words. Notice that the 
size of the elements of $C$ is fixed, $\lceil \lg(b+1)\rceil$ bits, whereas
the size of those of $O$, $\lceil \lg o_i\rceil$ bits, is variable. To pack 
the entries of $O$ in parallel, 
we need to compute an array $P[1..\lceil \ell/b\rceil]$ pointing to the
starting position of each element in $O$. Array $P$ is computed with a
parallel parallel prefix sum over the values $\lceil \lg o_i\rceil$.
This takes linear work and logarithmic span \cite{LSB17}, and then we
can write each value $o_i$ to its packed position in parallel. The array $P$
is retained to provide efficient access to $O$. To reduce its space to
$o(\ell)$ bits, only the entries of the form $P[i \cdot \log n]$ are stored in 
absolute form, whereas the others are stored as differences from the preceding 
multiple of $\log n$, using $\Oh{\log\log n}$ bits. This space reduction is
easily computed in parallel within the same time bounds. 
Once the data structures $C$, $O$, and $P$, using 
$\ell H + o(\ell)$ bits, are built, we can access in constant time any chunk of
$\Oh{\log \ell}$ bits from the bitvector by using tables \cite{RRR07}. Therefore,
we can provide $\rank$ and $\select$
functionality by building the classical $o(\ell)$-bit data structures on top
of the bitvector, in parallel \cite{LSB17}. In total, we use $\Oh{\ell}$ work
and $\Oh{\log \ell}$ span.

The structures to support $\degree(v)$ can then be constructed in parallel as
follows: First, we construct the bitvector $D$ by checking all the vertices with
degree at least \(f (m)\). Remember that the degree of a vertex $v$ can
be computed in constant time with $v.\mathit{last}-v.\mathit{first}$. Since the
degree of each vertex can be obtained independently, we can do this in parallel
with $\Oh{m}$ work and $\Oh{\lg m}$ span. Then, we construct the bitvector $E$ by
writing in unary the degree of each high-degree vertex. To do that, we perform a
parallel prefix sum over all the degrees of high-degree vertices. The prefix sum
returns the positions where we have to write a $1$ in $E$. Thus, we construct
$E$ with $\Oh{m}$ work and $\Oh{\lg m}$ span. Finally, we construct the compact
representation of $D$ and $E$ in $\Oh{m}$ work and $\Oh{\lg m}$ span, using the
sparse bitvectors of Raman {\em et al.}~\cite{RRR07}.

For the $\neighbour(u,v)$ query, we must contract the original graph $G$ into
a smaller graph $G' =(V',E')$, induced by all the vertices with degree at least
\(f (m)\). To build $G'$ efficiently in parallel we do as follows. We first
compute $D'[1..n]$ similarly to $D$. We then fill two arrays $X[1..n]$ and 
$Y[1..2m]$, so that $X[i]=D'[i]$; and $Y[j]=1$ if $D'[E_G[j].src]=1$ and 
$D'[E_G[j].tgt]=1$, and $Y[j]=0$ otherwise. Next, we perform a parallel prefix 
sum over $X$, so that $X[i]$ is the name of node $i$ in $G'$ (if $D'[i]=1$). 
We also perform a parallel prefix sum on $Y$, so as to write contiguously in 
array $E'$ the mapped edge targets, $E'[j']=X[E_G[j].tgt]$ for those entries 
$j$ where $Y[j]=1$, where $j'=\sum_{k=1}^j Y[k]$. For each such edge, we also 
check if it 
is the first with this $X[E_G[j].src]$ value, and if so, we record that $j'$ 
is the start of the adjacency list of node $X[E_G[j].src]$, in an array 
$V'[X[E_G[j].src]] = j'$.

Thus $V'$ and $E'$ are an adjacency list representation of $G'$, built with 
$\Oh{m}$ work and $\Oh{\lg m}$ span. Instead of sorting the adjacency lists, 
however, we build a wavelet tree representation on $E'$ \cite{LSB17}. This 
supports the operation $\rank$ generalized to sequences, and therefore we use
that high-degree nodes $u$ and $v$ of $G$ are connected if and only if $X[v]$ is mentioned
in the adjacency list of $X[u]$, that is,
$E'.\rank_{X[v]}(V'[X[u]+1]-1)-E'.\rank_{X[v]}(V'[X[u]]-1)>0$.
The generalized $\rank$ operation takes time $\Oh{\log |V'|}$ and the wavelet
tree is built with $\Oh{|E'|}=o(m/\log m)$ work and $\Oh{\log^2 |E'|} =
\Oh{\log^2 m}$ span.

\begin{lemma}
  \label{lem:parallel}
  Given a connected planar graph embedding $G$ with $m$ edges and a spanning 
  tree of $G$, we can compute in parallel a compact representation of $G$, 
  using $4m + o(m)$ bits and supporting the navigational operations 
  described in Section~\ref{sec:structure}, in $\Oh{m}$ work and $\Oh{\lg m}$ 
  span ($\Oh{\lg^2 m}$ span if operation $\neighbour$ is supported), using 
  $\Oh{m\lg m}$ bits of additional memory.
\end{lemma}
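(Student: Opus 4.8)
The plan is to assemble Lemma~\ref{lem:parallel} by verifying the three claimed resources — correctness, work/span bounds, and auxiliary memory — for Algorithm~\ref{algo:parAlgo} and for the extra structures of Section~\ref{sec:extrastructs}, relying on the per-step analysis already sketched in the text. First I would argue correctness: each entry of $LE$ computed in lines 5--19 faithfully records, for the corresponding edge of $T$, whether it is a forward or backward edge (a $0$ or $1$ for $B$) and the successor edge in the Euler tour, so that after the list ranking of line~20 the fields $\mathit{rankA}$ and $\mathit{rankB}$ give the correct positions of each edge in $A$ and $B$. By Lemma~\ref{lem:trees} the edges of $G\setminus T$, taken in the order induced by this traversal, spell the balanced-parentheses sequence of the complementary tree $T^*$; lines 24--35 merely relocate those edges into the array order of $B^*$ via $\mathit{D}_{pos}$/$\mathit{D}_{edge}$ and then decide, by comparing an edge's $B^*$-position with that of its mate, which occurrence is first. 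Hence $A$, $B$, $B^*$ coincide with the bitvectors defined in Section~\ref{sec:structure}, and the navigation operations are exactly those proved correct there; the structures of Section~\ref{sec:extrastructs} are built directly from the definitions of $D$, $E$, $D'$, $G'$ given in that section, so nothing new needs to be checked beyond that each parallel primitive (prefix sum, list ranking, wavelet-tree construction) computes the same thing as its sequential counterpart.

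Next I would collect the complexity bounds. The body of the section already walks through steps 1--7: array initialization and the Euler-tour pass are embarrassingly parallel over $\Oh{m}$ or $\Oh{n}$ independent items, costing $\Oh{m}$ work and $\Oh{\log m}$ span once the implicit {\bf parfor} overhead is charged; the Helman--J\'aj\'a list ranking \cite{Helman2001265} over $n$ elements is $\Oh{n}$ work and $\Oh{\log n}$ span; the relocation into $B^*$ (lines 24--35), with the nested {\bf parfor} of line~29, touches each edge of $G\setminus T$ a constant number of times, giving $\Oh{m-n}$ work and $\Oh{\log(m-n)}$ span; and the final rank/select and balanced-parentheses constructions are $\Oh{m}$ work, $\Oh{\log m}$ span by \cite{LSB17,Fuentes-Sepulveda2017}. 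Summing a constant number of such phases preserves $\Oh{m}$ work and $\Oh{\log m}$ span. For the optional $\neighbour$ support, the argument of Section~\ref{sec:extrastructs} shows $G'$ is built in $\Oh{m}$ work and $\Oh{\log m}$ span, but the wavelet-tree construction on $E'$ costs $\Oh{\log^2|E'|}=\Oh{\log^2 m}$ span \cite{LSB17}; since $|E'|=o(m/\log m)$ its work is absorbed into $\Oh{m}$. Thus the span rises to $\Oh{\log^2 m}$ precisely when $\neighbour$ is included, matching the statement. The extra-memory claim follows by inspection: $LE$, $\mathit{D}_{pos}$, $\mathit{D}_{edge}$ each have $\Oh{m}$ entries of $\Oh{\log m}$ bits, all other scratch arrays are no larger, and this is independent of the thread count; the output itself is $4m+o(m)$ bits by Theorem~\ref{thm:main}.

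The main obstacle I anticipate is not any single estimate but the bookkeeping around the index translations in lines 24--31 — in particular establishing that $p = \mathit{rankA}-\mathit{rankB}$ is exactly the number of $G\setminus T$ edges preceding edge $E_T[j]$ in the traversal, that $\mathit{ref}$ and the array $C$ together pin down where the block of edges following the mate of $E_T[j]$ lands in $B^*$, and that the offset $\mathit{delta}$ aligns the $\mathit{D}_{pos}$/$\mathit{D}_{edge}$ pair consistently. Once that invariant is stated cleanly, the comparison test of line~34 is immediate and the rest is routine. I would therefore present the proof as: (i) a short correctness paragraph invoking Lemma~\ref{lem:trees} and the index invariant above; (ii) a paragraph aggregating the per-step work/span figures already given in the section, noting where the $\Oh{\log^2 m}$ term enters; and (iii) a one-line remark on the $\Oh{m\log m}$-bit scratch space and its independence from $p$. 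This keeps the proof at the level of a verification, which is appropriate since every ingredient has been argued in the surrounding text.
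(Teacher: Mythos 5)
Your proposal is correct and follows essentially the same route as the paper, which states Lemma~\ref{lem:parallel} as a summary of the step-by-step work/span analysis of Algorithm~\ref{algo:parAlgo} (steps 1--7), the memory accounting for $LE$, $\mathit{D}_{pos}$, $\mathit{D}_{edge}$, and the construction of the $\degree$/$\neighbour$ structures in Section~\ref{sec:extrastructs}, with the $\Oh{\lg^2 m}$ span arising exactly from the wavelet-tree construction on $E'$. Your explicit flagging of the index invariant behind lines 24--31 is if anything slightly more careful than the paper, which leaves that bookkeeping implicit.
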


\subsection{Parallel computation of spanning trees}
\label{subsec:spanning}

In this section we discuss the parallel computation of the spanning tree 
$T=(V_T,E_T)$ and the array $C$ used in Section~\ref{subsuc:paralgo}.

Generating a rooted (or a directed) spanning tree turns out to be a difficult
to parallelize problem. Even if it seems to be easier on planar embeddings,
we do not know of good worst-case results on the DyM model. We discuss practical
solutions later.

Such a spanning tree algorithm returns an array of parent references for each 
vertex. With this array of references, we can construct the corresponding 
adjacency list
representation of the spanning tree. To do that, we mark with a
$1$ each edge $E_G$ that belongs to $E_T$ and with a $0$ the rest of the
edges. Using a parallel prefix sum algorithm over $E_G$, we compute the position
of all the marked edges of $E_G$ in $E_T$. The $\mathit{first}$ and
$\mathit{last}$ fields of each node in the spanning tree are computed
similarly. As a byproduct of the computation of $E_T$, we can compute the array
$C$, which stores the number of edges of $G\setminus T$ between two consecutive
edges in $T$, in counter-clockwise order. This can be done by using the marks in
the edges, counting the number of $0$s between two consecutive $1$s. Note that
the starting vertex for the spanning tree must be in the outer face of $G$,
to meet the description of the compact data structure for planar embeddings.
Overall, we require times $T_1=\Oh{m}$ and $T_\infty=\Oh{\lg m}$ once the
spanning tree is built, which is the complexity of the variants of the
parallel prefix sum algorithm we employ. By combining the results with
Lemma~\ref{lem:parallel}, we have the main result on construction.

\begin{theorem}
\label{thm:parallel}
The compact representation introduced in Theorem~\ref{thm:main} of a 
connected planar graph embedding $G$ with $m$ edges
can be constructed under the Dynamic Multithreaded parallel model with 
$\Oh{m+\mathrm{spw}}$ work and $\Oh{\log m+\mathrm{sps}}$ span
($\Oh{\log^2 m+\mathrm{sps}}$ span if operation $\neighbour$ is supported), 
where $\mathrm{spw}$ and $\mathrm{sps}$ are the work and span, respectively, of 
any rooted spanning tree algorithm on planar embeddings.
\end{theorem}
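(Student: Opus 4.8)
The plan is to decompose the construction into two phases and then compose their parallel complexities. The first phase is to compute a rooted spanning tree $T=(V_T,E_T)$ of the input embedding $G$, together with the auxiliary array $C$ counting the non-tree edges between consecutive tree edges in counter-clockwise order; this is the content of Section~\ref{subsec:spanning}, and by hypothesis it is carried out by some black-box rooted spanning tree algorithm on planar embeddings with work $\mathrm{spw}$ and span $\mathrm{sps}$. The second phase is to run Algorithm~\ref{algo:parAlgo} on $(G,T,C)$ to produce the bitvectors $A$, $B$, $B^*$ together with the $\rank$/$\select$ and balanced-parentheses support structures; by Lemma~\ref{lem:parallel} this phase takes $\Oh{m}$ work and $\Oh{\log m}$ span (or $\Oh{\log^2 m}$ span if $\neighbour$ is to be supported). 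Correctness of the output follows from Lemma~\ref{lem:trees} and the analysis already given for Algorithm~\ref{algo:parAlgo} and the query implementations in Section~\ref{sec:structure}.

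The remaining step is to account for the post-processing of the spanning-tree output that produces the adjacency-list representation $V_T, E_T$ required as input by Lemma~\ref{lem:parallel}, and the array $C$. Here I would follow Section~\ref{subsec:spanning}: mark each edge of $E_G$ with a bit indicating membership in $E_T$, run a parallel prefix sum over $E_G$ to assign each tree edge its position in $E_T$, set the $\mathit{first}$ and $\mathit{last}$ fields of each vertex by a second prefix sum, and compute $C$ by counting runs of $0$s between consecutive $1$s in the marking, again via prefix sums. Each of these is a constant number of parallel prefix-sum or parallel-scan passes over arrays of length $\Oh{m}$, so this glue step costs $T_1 = \Oh{m}$ work and $T_\infty = \Oh{\log m}$ span, which is subsumed by the bounds in Lemma~\ref{lem:parallel}.

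Composing the two phases sequentially, the total work is $\mathrm{spw} + \Oh{m} = \Oh{m + \mathrm{spw}}$ and the total span is $\mathrm{sps} + \Oh{\log m} = \Oh{\log m + \mathrm{sps}}$, with the span becoming $\Oh{\log^2 m + \mathrm{sps}}$ when the $\neighbour$ structures of Section~\ref{sec:extrastructs} are also built, since the wavelet-tree construction on $E'$ contributes $\Oh{\log^2 m}$ span while its work $\Oh{|E'|} = o(m/\log m)$ is absorbed into $\Oh{m}$. Finally, one should note that the construction is carried out under the Dynamic Multithreading model, and the additive logarithmic-factor overheads incurred by expanding each \textbf{parfor} into a balanced spawn tree are already included in the per-step analysis preceding Lemma~\ref{lem:parallel}, so no further adjustment is needed. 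The only genuinely delicate point — and the reason the theorem is phrased parametrically in $\mathrm{spw}$ and $\mathrm{sps}$ — is that we do not have a worst-case linear-work, polylogarithmic-span rooted spanning tree algorithm for planar embeddings on this model; thus the main obstacle is not in the proof but in the fact that the spanning-tree subroutine must be left as a black box, and I would make this explicit rather than attempt to close that gap here.
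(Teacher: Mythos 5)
Your proposal is correct and follows essentially the same route as the paper: treat the rooted spanning tree computation as a black box contributing $\mathrm{spw}$ work and $\mathrm{sps}$ span, build $E_T$, the $\mathit{first}$/$\mathit{last}$ fields, and $C$ with a constant number of parallel prefix sums in $\Oh{m}$ work and $\Oh{\log m}$ span, and then invoke Lemma~\ref{lem:parallel} for the remainder, with the wavelet tree on $E'$ accounting for the $\Oh{\log^2 m}$ span when $\neighbour$ is supported. Nothing further is needed.
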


\paragraph{In practice}
The generation of a spanning tree is also difficult to parallelize in
practice. Bader and Cong \cite{BaderCong2005} mention that ``the spanning tree 
problem is notoriously hard for any parallel implementation to achieve 
reasonable speedup'', and propose an algorithm that is shown to perform well
in practice. This is the one we use in our implementation.

Their algorithm works as follows. Given a starting vertex of the graph $G$ 
with $n$ vertices and $m$ edges, the algorithm computes
sequentially a spanning tree of size $\Oh{p}$, called {\em stub spanning
  tree}, where $p$ is the number of available threads. Then, the leaves of the 
stub spanning tree are evenly assigned 
to the $p$ threads as starting vertices. Each thread traverses $G$, using its
starting vertices, constructing spanning trees with a DFS traversal using a
stack. For each vertex, a reference to its parent is assigned. Since a vertex can be visited by
several threads, the assigment of the parent of the vertex may genarate a {\em race
  condition}. However, since the parent assigned by any thread already
belongs to a spanning tree, any assignment will generate a correct tree. Thus,
the race condition is benign. Once a thread has no more
vertices on its stack, it tries to steal vertices from the stack of another
thread by using the work-stealing algorithm. Since the spanning
trees generated by all the threads are connected to the stub spanning tree, the
union of all the spanning tree generates a spanning tree of $G$. 

They analyze their algorithm in expectation on random graphs, obtaining
$\Oh{m/p}$ time when $p \ll m$, but general random graphs have a very small 
diameter. The diameter seems to be a lower bound for the span of their 
algorithm, and this is $\Theta(n^{1/4})$ on random planar graphs \cite{CFGN15}.
Also, their best possible time is $\Oh{\sqrt{m}}$, achieved when using 
$p=\sqrt{m}$ processors. Despite its analysis, the algorithm of Bader
and Cong has a good practical behavior and its implementation is simple. 

To handle unconnected planar graphs, we can first use the algorithm of
Shun {\it et al.}~\cite{Shun:2014:SPL:2612669.2612692}, which finds the
connected components within $\Oh{n}$ work and $\Oh{\log^3 n}$ span with high 
probability, and is shown to perform well in practice.

\paragraph{PRAM model}
We can also analyze our algorithm under the PRAM model. 
Algorithm \ref{algo:parAlgo} is easily translated into the EREW model, 
reaching $\Oh{m/\log m}$ processors and $\Oh{\log m}$ time, dominated by the 
parallel list ranking of line 20, the expansion from $n$ to $m$ processors in 
line 29, and the construction of succinct structures in line 36. The
construction in Section~\ref{sec:extrastructs}, of the structures that speed up 
$\degree$ and $\neighbour$ queries, is also easily carried out in the EREW
model within those bounds, except for the sorting of the edges of $G'$. This
can be done in $\Oh{\log m}$ time with $\Oh{m}$ processors in the EREW model
\cite{Col88}, and in $\Oh{\log^2 m}$ time with $\Oh{m/\log m}$ processors in 
the CREW model \cite{BN89}. The postprocessing we have described in this 
section, once the spanning tree is built, also runs in $\Oh{\log m}$ time and
$\Oh{m/\log m}$ EREW processors.

The most costly part of the process is likely to be the construction of the 
spanning tree. The best PRAM results we know of are
$\Oh{\log^2 m \log^* m}$ time and $\Oh{m}$ processors in the EREW model
\cite{Sha88}, $\Oh{\log^2 m}$ time and $\Oh{m/\log m}$ processors in the 
arbitrary CRCW model \cite{KTT95}, and $\Oh{\log m}$ time and $\Oh{m^3}$ 
processors in the same model \cite{Hag90}.

\begin{theorem}
\label{thm:pram}
The compact representation introduced in Theorem~\ref{thm:main} of a 
conncected planar graph embedding $G$ with $m$ edges
can be constructed under the PRAM EREW model with $\Oh{m}$ processors and
$\Oh{\log^2 m \log^* m}$ time, and under the PRAM arbitrary CRCW model with
$\Oh{m/\log m}$ processors and $\Oh{\log^2 m}$ time, or $\Oh{m^3}$ processors
and $\Oh{\log m}$ time. 
\end{theorem}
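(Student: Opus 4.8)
The plan is to read the result off a decomposition of the construction into three phases and bound each of them under the PRAM models. The phases are: (i) compute a rooted spanning tree $T$ of $G$ together with the array $C$ of Section~\ref{subsuc:paralgo}; (ii) run Algorithm~\ref{algo:parAlgo} to build the bitvectors $A$, $B$, $B^*$ and their $\rank$/$\select$/$\match$/$\parent$ structures; and (iii) build the auxiliary structures of Section~\ref{sec:extrastructs} that support fast $\degree$ and $\neighbour$. The claim then follows by showing that phases (ii) and (iii) cost $\Oh{\log m}$ time with $\Oh{m/\log m}$ processors (or $\Oh{\log^2 m}$ time in one sub-case), so that the overall cost is dominated by phase (i), for which we plug in the three cited PRAM spanning-tree algorithms.

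First I would translate Algorithm~\ref{algo:parAlgo} into the EREW PRAM. Each {\bf parfor} with $k$ constant-time iterations over consecutive array indices is run on $\lceil k/\log k\rceil$ EREW processors in $\Oh{\log k}$ time by assigning $\lceil\log k\rceil$ consecutive iterations to each processor; one checks that in every loop distinct iterations write to distinct cells, so there are no concurrent writes. Two points need the standard EREW care. The inner loop of lines~28--31 must spread the $n$ processors of the outer loop over the $\Theta(m)$ cells of $D_{pos}$ and $D_{edge}$; this ``expansion from $n$ to $m$ processors'' is realized by one extra prefix sum over the values $C[\cdot]$, after which the inner iterations are allotted to processors obliviously. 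Also, lines~5--19 would have up to $\deg_T(v)$ iterations read the same vertex record (and all iterations read the scalar $\mathit{init}$), which are concurrent reads; these are removed by a standard preprocessing that broadcasts the scalars and replicates the $\Oh{n}$-word vertex table (sort the half-edges by endpoint and broadcast each record within its group), costing $\Oh{\log m}$ time and $\Oh{m}$ EREW processors. Parallel list ranking (line~20) runs in $\Oh{\log n}$ time on $\Oh{n/\log n}$ EREW processors, and the builders of Labeit {\it et al.}~\cite{LSB17} and Ferres {\it et al.}~\cite{Fuentes-Sepulveda2017} invoked in line~36 are already EREW algorithms within $\Oh{\log m}$ time and $\Oh{m/\log m}$ processors. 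Hence phase~(ii) costs $\Oh{\log m}$ time and $\Oh{m/\log m}$ processors, i.e.\ $\Oh{m}$ work, matching the DyM bound.

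Next, the postprocessing of phase (i) and all of phase (iii). From the parent-reference array returned by a spanning-tree algorithm---rooting it, if the algorithm does not, at a chosen outer-face vertex by an Euler tour plus list ranking---deriving the adjacency-list form of $T$, the $\mathit{first}$/$\mathit{last}$ fields and the array $C$ is a constant number of prefix sums over $E_G$, hence $\Oh{\log m}$ time and $\Oh{m/\log m}$ EREW processors; the counter-clockwise order around each vertex of $T$ is simply inherited by restricting the embedding of $G$ to the tree edges. For phase~(iii): the sparse bitvectors of Raman {\it et al.}~\cite{RRR07} used for $D$, $E$ and $D'$ are built block-by-block, with a prefix sum to pack the variable-length offsets, within $\Oh{\log m}$ time and $\Oh{m/\log m}$ EREW processors as described in Section~\ref{sec:extrastructs}; for the contracted graph $G'$ we replace the wavelet tree (whose construction has $\Oh{\log^2 m}$ span) by sorting the $o(m/\log m)$ adjacency lists of $G'$, which takes $\Oh{\log m}$ time with $\Oh{m}$ EREW processors by Cole's merge sort~\cite{Col88} (or $\Oh{\log^2 m}$ time with $\Oh{m/\log m}$ CREW processors~\cite{BN89}), and then $\neighbour(u,v)$ is answered by the binary-search method of Section~\ref{sec:structure} in time $\Oh{\log m}=o(f(m))$. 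Thus phase~(iii) also fits in $\Oh{\log m}$ time and $\Oh{m}$ processors.

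Finally I would combine the pieces: every phase other than the spanning tree runs in $\Oh{\log m}$ time on $\Oh{m}$ processors, so the overall cost equals that of phase~(i). Plugging in the EREW spanning-tree algorithm of~\cite{Sha88} gives $\Oh{\log^2 m \log^* m}$ time with $\Oh{m}$ processors; plugging in the arbitrary-CRCW algorithm of~\cite{KTT95} gives $\Oh{\log^2 m}$ time with $\Oh{m/\log m}$ processors; and plugging in the arbitrary-CRCW algorithm of~\cite{Hag90} gives $\Oh{\log m}$ time with $\Oh{m^3}$ processors (in this last case we take the Cole-sort variant so that no step of phases (ii)--(iii) exceeds $\Oh{\log m}$ time). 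The hard part is not inside this proof at all: it is the lack of a parallel rooted-spanning-tree algorithm for planar embeddings that matches the $\Oh{m}$ work and $\Oh{\log m}$ span of everything else, which is exactly why these PRAM bounds are phrased relative to the cited spanning-tree results rather than being stated unconditionally.
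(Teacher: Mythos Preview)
Your proposal is correct and follows essentially the same route as the paper: translate Algorithm~\ref{algo:parAlgo} and the postprocessing of Sections~\ref{sec:extrastructs}--\ref{subsec:spanning} into EREW within $\Oh{\log m}$ time and $\Oh{m}$ processors (swapping the wavelet tree for Cole's sort~\cite{Col88} or the CREW sort of~\cite{BN89}), then let the spanning-tree step dominate by plugging in~\cite{Sha88,KTT95,Hag90}. You are in fact more explicit than the paper about handling concurrent reads in lines~5--19 and about choosing between the two sorting results depending on which processor/time budget is active; the only slip is that after noting the broadcast costs $\Oh{m}$ EREW processors you still write ``phase~(ii) costs \ldots\ $\Oh{m/\log m}$ processors,'' though your final summary (``$\Oh{\log m}$ time on $\Oh{m}$ processors'') is the right statement and matches the theorem.
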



\section{Experiments}
\label{sec:experiments}

We implemented the data structure construction and queries in C and compiled 
it using GCC 5.4. For the parallel construction we used Cilk Plus extension, 
an implementation of the DyM model.
We build only the basic structures, excluding those to speed up operations 
{\tt degree} and {\tt neighbour}.
The code and data needed to replicate our results are available
at \url{http://www.dcc.uchile.cl/~jfuentess/pemb/}.

The experiments were carried out on a NUMA machine with two NUMA nodes. Each
NUMA node includes a 14-core Intel\textregistered{} Xeon\textregistered{} CPU
(E5-2695) processor clocked at 2.3GHz. The machine runs Linux
4.4.0-83-generic, in 64-bit mode. The machine has per-core L1 and L2
caches of sizes 64KB and 256KB, respectively and a per-processor shared L3 cache
of 35MB, with a 768GB DDR3 RAM memory (384GB per NUMA node), clocked at
1867MHz. Hyperthreading was enabled, giving a total of 28 logical cores per NUMA
node.

\subsection{Datasets}

Our experiments ran on real and artificial datasets with different numbers of 
nodes. The datasets are shown in Table~\ref{tbl:datasets}. For the artificial
datasets we generated points $(x,y)$ with the function {\tt rnorm} of {\tt R}.%
\footnote{The {\tt rnorm} function generates random numbers with normal 
distribution given a mean and a standard deviation. In our case, the $x$ and 
$y$ components were generated using mean $0$ and standard deviation $10000$. 
For more information about the {\tt rnorm} function, visit
  \url{https://stat.ethz.ch/R-manual/R-devel/library/stats/html/Normal.html}}
The real dataset, {\tt wc}, corresponds to the coordinates of $2,243,467$ 
unique cities in the world.\footnote{The dataset containing the coordinates 
was created by {\em MaxMind}, available from
  \url{https://www.maxmind.com/en/free-world-cities-database}. The original
  dataset contains $3,173,959$ cities, but some of them have the same
  coordinates. We selected the $2,243,467$ cities with unique coordinates to
  build our dataset {\tt wc}.} 
From those real or generated points, we obtained a
{\em Delaunay Triangulation} using {\em Triangle}, a 
software for the generation of meshes and triangulations\footnote{Available 
at \url{ http://www.cs.cmu.edu/~quake/triangle.html}. Our triangulations were 
generated  using the options {\tt -cezCBVPNE}.}. 
Finally, we generated planar embeddings from the Delaunay triangulations with
the {\em Edge Addition Planarity Suite}\footnote{Available at
  \url{https://github.com/graph-algorithms/edge-addition-planarity-suite}. Our
  embeddings were generated using the options {\tt -s -q -p}.}. The minimum and
maximum degree of the dataset {\tt wc} was 3 and 36, respectively. For the rest
of the datasets, the minimum degree was 3 and the maximum degree was 16. 

\begin{table}[t]
\begin{center}
   \begin{tabular}{r@{\hspace{3ex}}l@{\hspace{3ex}}r@{\hspace{3ex}}r@{\hspace{3ex}}}
     \toprule 	
     & Dataset & Vertices ($n$) & Edges ($m$) \\ 
     \midrule 	
     1 & wc & 2,243,467  & 6,730,395 \\ 
     2 & pe5M & 5,000,000  & 14,999,983 \\ 
     3 & pe10M & 10,000,000  & 29,999,979 \\ 
     4 & pe15M & 15,000,000  & 44,999,983 \\ 
     5 & pe20M & 20,000,000  & 59,999,975 \\ 
     6 & pe25M & 25,000,000  & 74,999,979 \\ 
     \bottomrule 	
   \end{tabular} 
\end{center}
\caption{Datasets used in our experiments.}
\label{tbl:datasets}
\end{table}

\subsection{Space usage}

There are no other implemented compact representations of planar embeddings. 
In this subsection we aim to show that representations designed for other
kinds of graphs are indeed not competitive for this graph family. We compare our
compact representation with four solutions designed to compress
Web graphs, social networks and planar graphs
~\cite{a2031031,Boldi:2011:LLP:1963405.1963488,BLN14,BBK03}, and with one
parallel framework for processing general graphs in compressed
form~\cite{7149297}. The three solutions for Web graphs and social networks 
require reordering the vertices of the graph.
The solution of Apostolico and Drovandi~\cite{a2031031}
({\sc AD}) enumerates the vertices through
a BFS traversal of the graph. The reordering induces two useful
properties: {\em locality} (a vertex with index $i$ will have neighbours with
indexes close to $i$), and {\em similarity} (vertices with similar index will
have similar adjacency lists). Thus, the vertices and their adjacency lists are
compressed following the ordering induced by the BFS traversal. The solution of
Boldi {\it et al.}~\cite{Boldi:2011:LLP:1963405.1963488} ({\sc BRSV}) 
reorders the nodes based on a clustering algorithm called {\em Layered
  Label Propagation (LLP)}. The LLP algorithm is used in combination with the 
{\em WebGraph} framework \cite{Boldi:2004:WFI:988672.988752}.
Brisaboa~{\it et al.}~\cite{BLN14} proposed the {\em
  k\textsuperscript{2}-tree} structure for graph compression. The
k\textsuperscript{2}-tree is a compact tree representation of the adjacency
matrix of a graph. The structure exploits the clustering of the edges in the
adjacency matrix, representing large empty areas of the matrix
efficiently. The clustering is dependent on the ordering of the vertices of the
graph. In our comparison, we used the k\textsuperscript{2}-tree structure
combined with the BFS traversal of~\cite{a2031031}, as suggested
by Hern\'andez and Navarro \cite{Hernandez2014}. Blandford~{\it et al.} \cite{BBK03} proposed
a compact representation based on graph separators ({\sc GS}). To construct the
compact representation, the vertices of the graph must be renumbered. The new
numbering is computed recursively, decomposing the graph by the computation of
graph separators. The sequence of computed separators generate the new
numbering. After the renumbering step, adjacent vertices tend to be close in the numbering. The
representation takes advantage of that and reorders the adjacency list of each
vertex, storing the difference between consecutive neighbours. Finally, the
adjacency lists are encoded space-efficiently. In our
experiments, we use the {\em child-flipping} heuristic \cite{BBK03} to compute
the numbering of the vertices and {\em snip} code to encode the adjacency
lists, which was the best among the choices we tested.
Shun~{\it et al.} \cite{7149297}
introduced {\em Ligra}+, a lightweight graph processing framework
for shared-memory multicore machines. In Ligra+, the graph is stored in
compressed form, by compressing the adjacency list of each vertex. The
adjacency list of each vertex is sorted in increasing order and then the
consecutive differences are run-length encoded. Finally, we also consider a
plain representation {(\sc Plain)} composed by an array of length $2m$,
representing the concatenation of the adjacency lists, and an array of length
$n$, representing the beginning of the adjacency list of each vertex.

\begin{table}[t]
\begin{center}
   \begin{tabular}{l@{\hspace{3ex}}ccccccc}
     \toprule 	
     Dataset & {\sc Plain} & {\sc Ligra+} & {\sc BRSV} & {\sc AD} & {\sc
       k\textsuperscript{2}-tree} & {\sc GS} & {\sc Pemb}\\ 
     \midrule 	
     wc & 74.67 & 52.50 & 14.57 & 14.73 & 16.40 & 14.88 & 6.00 \\ 
     pe5M & 74.67 & 52.99 & 14.97 & 14.14 & 15.33 & 15.12 & 5.93 \\ 
     pe10M & 74.67 & 53.15 & 15.03 & 14.33 & 14.73 & 15.12 & 5.93\\ 
     pe15M & 74.67 & 53.20 & 15.04 & 14.38 & 14.38 & 15.12 & 5.72\\ 
     pe20M & 74.67 & 53.24 & 15.07 & 14.43 & 14.15 & 15.12 & 5.93\\ 
     pe25M & 74.67 & 53.32 & 15.11 & 14.50 & 13.96 & 15.14 & 5.80\\ 
     \bottomrule 	
   \end{tabular} 
\end{center}
\caption{Bits per edge (bpe) of the plain representation, alternative 
compressed graph representations, and ours.}
\label{tbl:bpe}
\end{table}

Table~\ref{tbl:bpe} shows the bits per edge ({\em bpe}) of all the
representations, where our solution is called {\sc Pemb}, for {\em planar
  embedding}. In the table, we consider four bytes for each vertex and edge in
the plain representation, equivalent to an integer number in common programming
languages. Our compact
representation reaches the best results, using less than half of the space
of its closest competitor. Note that the other results, using widely different
techniques, obtain very close results, around 15 bpe. This seems to suggest
that exploiting planarity is the key to obtain a drastic reduction in space.
Our results, with at most $6$ bpe, are in accordance
with the $4m+o(m)$ bits of Theorem~\ref{thm:main}.\footnote{We can get closer
to 4 bpe by sparsifying the sublinear-size structures used to query bitvectors and
parentheses, thus trading space for query time.} Notice that due to the
reordering needed by the other representations, they are not suitable for
representing a particular planar embedding.

\subsection{Query times}

\begin{table}[t]
\begin{center}
   \begin{tabular}{l@{\hspace{1.3ex}}c@{\hspace{1.3ex}}c@{\hspace{1.3ex}}c@{\hspace{1.3ex}}c@{\hspace{1.3ex}}c@{\hspace{1.3ex}}c@{\hspace{1.3ex}}c@{\hspace{1.3ex}}c@{\hspace{1.3ex}}}
     \toprule
     \multirow{2}{*}{Dataset} & \multicolumn{4}{c}{Plain} &
     \multicolumn{4}{c}{Compact}\\
     \cmidrule(r){2-5}\cmidrule(r){6-9}
      & {\tt degree} & {\tt listing} & {\tt face} & {\tt dfs} & {\tt degree} & {\tt listing} & {\tt face} & {\tt dfs}\\ 
     \midrule 	
     wc & 0.01 & 0.12 & 0.35 & 0.51$s$ & 4.04 & 20.01 & 8.28 & 14.87$s$\\
     pe5M & 0.02 & 0.14 & 0.51 & 1.39$s$ & 4.24 & 20.65 & 8.55 & 34.61$s$\\
     pe10M & 0.03 & 0.14 & 0.60 & 2.65$s$ & 4.41 & 21.24 & 8.82 & 70.05$s$\\
     pe15M & 0.03 & 0.15 & 0.62 & 4.51$s$ & 4.51 & 21.61 & 8.98 & 106.50$s$\\
     pe20M & 0.03 & 0.15 & 0.64 & 5.66$s$ & 4.60 & 21.77 & 9.17 & 142.20$s$\\
     pe25M & 0.03 & 0.15 & 0.64 & 7.46$s$ & 4.64 & 22.15 & 9.40 & 181.09$s$\\
     \midrule 	
     lim{25M} & 9.31$ms$ & 42.62$ms$ & 12.79$ms$ & - & 2.04$\mu s$ &
     8.47$\mu s$ & 5.26$\mu s$ & 150.20$s$ \\ 
     \bottomrule 	
   \end{tabular} 
\end{center}
\caption{Median times of degree, listing and face queries, and the DFS
  traversal. All the values are in microseconds ($\mu s$), except the {\tt dfs}
  columns and the lim{25M} row, which explicitly indicate
$\mu s$, $ms$ or $s$ (seconds).}
\label{tbl:queries}
\end{table}


We test the time to carry out the three basic queries introduced in
Section~\ref{sec:structure}: {\tt degree}, {\tt listing} and {\tt
  face}. Additionally, 
we test a more complex operation: a depth-first search traveral, {\tt dfs},
starting from an arbitrary vertex and using a stack. We solve
{\tt degree} by sequentially traversing the edges, as we have not built the 
extra data structures to speed up this query. Observe
that, given an adjacency list representation, answering {\tt degree} and {\tt listing} queries is
straightforward. We measured the time of queries {\tt 
degree} and {\tt listing} 10 times per vertex, {\tt face} 10 times per edge, and
{\tt dfs} 10 times for 30 random vertices.
Table~\ref{tbl:queries} shows the median time per query, both for the 
plain representation and for our compact representation. The
plain representation answers {\tt degree}
and {\tt listing} queries $200$ and $150$ times faster than the compact
representation, respectively. This result was expected, since the plain
representation we use already has the list of neighbours in counter-clockwise
order. For the {\tt face} query and the {\tt dfs} traversal, the adjacency list
representation is only $16$ and $26$ times faster, respectively.

This slowdown is the price of a representation that uses about 13 times less 
space, that is, it could hold graphs 13 times larger in main memory. To
illustrate the effect of holding the compressed graph representation in main
memory versus having to handle it on disk, we replicate the experiments in a 
machine with artificially limited memory. For these new experiments we use the pe25M dataset,
whose plain representation requires 668MB, whereas its compact
representation needs only 52MB. 
The machine was set to use at most 600MB of RAM memory\footnote{The computer tested is a
  Intel\textregistered{} Core\textsuperscript{TM} i7-7500U CPU, with four
  physical cores running at 2.70GHz. The computer runs Linux
4.8.0-53-generic, in 64-bit mode. This machine has per-core L1 and L2 caches of sizes 64KB and 256KB,
respectively, and a shared L3 cache of 4MB, with a 8GB DDR4 RAM. To reduce the
size of the available physical memory, we set the {\tt mem} parameter of the
Linux Kernel to {\tt mem=600MB}.}, just slightly less than the necessary to
hold the whole input representation. The results are shown in the last row of
Table~\ref{tbl:queries}. For {\tt degree} query, the compact representation is
around 4,500 
times faster than the plain representation. For the {\tt listing} query, the
difference is around 5,000 times. For the {\tt face} query, the compact
representation is around 2,400 times faster than the plain representation.  We aborted
the experiment on {\tt dfs} for the adjacency list representation after two hours; a
projection of the other results suggests that more than a day would have been
needed.

Thus, the compact representation pays off when it is the key to allow
holding the graph in main memory.


\subsection{Parallel construction}

\begin{table}[t]
\begin{center}
  \small
  \setlength{\tabcolsep}{0pt}
  \begin{tabular}{c@{\hspace{.8em}}r@{\hspace{.8em}}r@{\hspace{.8em}}r@{\hspace{.8em}}r@{\hspace{.8em}}r@{\hspace{.8em}}r}
    \toprule
    $p$ & {\tt wc} & {\tt pe5M} & {\tt pe10M} & {\tt pe15M} & {\tt pe20M}
    & {\tt pe25M}\\
    \midrule
        {\tt seq} & 2.93 & 7.36 & 15.46 & 23.61 & 31.76 & 40.01 \\
        1 & 3.56 & 8.93 & 18.77 & 28.78 & 39.33 & 49.20 \\
        2 & 2.24 & 5.15 & 10.74 & 16.24 & 21.88 & 27.26 \\
        4 & 1.33 & 2.98 & 5.94 & 8.94 & 12.31 & 15.25 \\
        8 & 0.76 & 1.73 & 3.43 & 5.03 & 6.53 & 8.14 \\
        12 & 0.54 & 1.22 & 2.43 & 3.59 & 4.70 & 5.84 \\
        16 & 0.43 & 1.00 & 1.86 & 2.80 & 3.66 & 4.54 \\
        20 & 0.36 & 0.80 & 1.63 & 2.37 & 3.13 & 3.88 \\
        24 & 0.31 & 0.72 & 1.41 & 2.05 & 2.83 & 3.44 \\
        28 & 0.27 & 0.65 & 1.23 & 1.97 & 2.50 & 3.08 \\
        32 & 0.27 & 0.60 & 1.21 & 1.77 & 2.30 & 2.88 \\
        36 & 0.27 & 0.59 & 1.12 & 1.69 & 2.25 & 2.74 \\
        40 & 0.24 & 0.54 & 1.05 & 1.57 & 2.07 & 2.60 \\
        44 & 0.23 & 0.52 & 0.99 & 1.49 & 1.96 & 2.46 \\
        48 & 0.22 & 0.49 & 0.95 & 1.42 & 1.91 & 2.35 \\
        52 & 0.22 & 0.48 & 0.92 & 1.38 & 1.82 & 2.28 \\
        56 & 0.22 & 0.47 & 0.91 & 1.34 & 1.78 & 2.23 \\
        \bottomrule
  \end{tabular}
\end{center}
\caption{Running times of the parallel construction algorithm in seconds.}
\label{tbl:parallelTimes}
\end{table}

We now evaluate the performance of our parallel construction. 
In our implementation of the parallel spanning tree algorithm of Bader and 
Cong \cite{BaderCong2005}, to limit the worst case, we included a treshold of 
$\Oh{m/p}$ elements in the stack size of each thread. Each time a thread has 
more nodes that the threshold, it creates a new parallel thread with half of 
its stack. Additionally, we also return, for each node, the reference to 
its parent. This yields better performance than forcing the first edge of each 
node to lead to its parent. 

Additionally, we implemented a 
sequential algorithm called {\tt seq}, which corresponds to a sequential DFS
algorithm to build the spanning tree, followed by the serialization of the 
parallel algorithm. To serialize a parallel algorithm in the DyM model,
we replaced each {\bf parfor} keyword for the {\bf for} keyword and deleted the
{\bf spawn} and {\bf sync} keywords.
Each data point is the median of 15 measurements.

\begin{figure}[t]
\centerline{\includegraphics[width=0.7\textwidth]{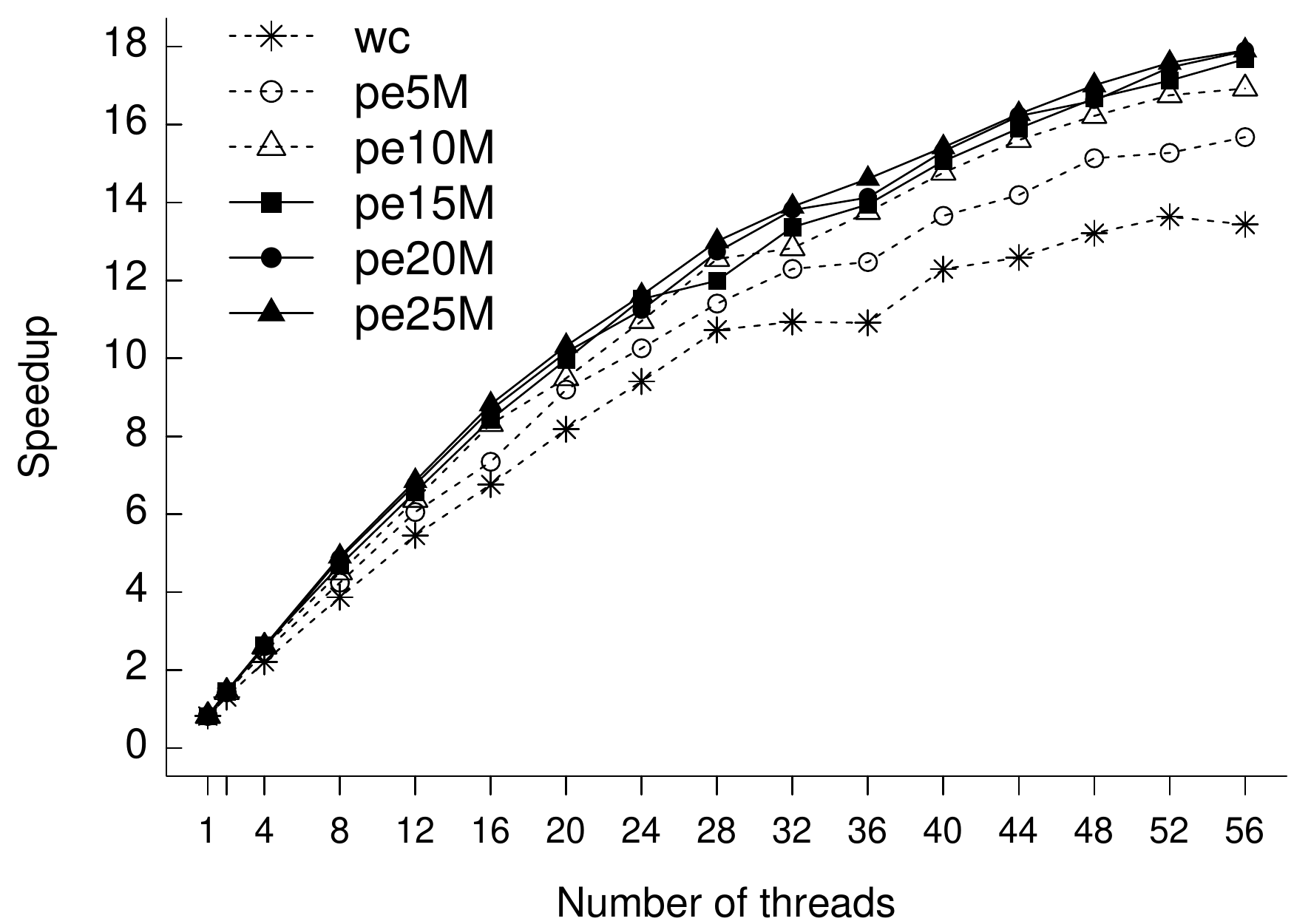}}
\caption{Speedup of the parallel algorithm.}
\label{fig:speedup}
\end{figure}

Table~\ref{tbl:parallelTimes} shows the running times obtained in our
experiments, and Figure~\ref{fig:speedup} shows the speedups compared with the
{\tt seq} algorithm. On average, the {\tt seq} algorithm took about 82\% of the
time obtained by the parallel algorithm running with 1 thread. With $p\geq 2$,
the parallel algorithm shows better times than the {\tt seq} algorithm. We
observe an almost linear speedup up to $p=24$, with an efficiency of at least
40\% for the smaller datasets and almost 50\% for the bigger ones.
With $p=28$ the speedup has a slowdown, due 
to the topology of our machine. Up to 24 cores, all the threads were running in
the same NUMA node. With $p\geq 28$, both NUMA nodes are used, which implies
higher communication costs. The communication costs intra NUMA nodes are lower
than the communication costs inter NUMA nodes \cite{Drepper2007}. In particular,
the case of $p=28$ also uses both NUMA nodes, since at least one core on our
machine was available to OS processes. For $p=56$, the {\tt wc} dataset exhibits
an efficiency of only 24\%, as it is the smallest one. For the bigger
datasets, the lowest efficiency is 32\%. 

The running times and speedups reported in Table~\ref{tbl:parallelTimes} and
Figure~\ref{fig:speedup} include the construction of 
bitvectors and balanced 
parentheses sequences, to support rank, select, parent, and match
operations. To measure the efficiency of our algorithm, without the influence of the
construction of those additional data structures, we repeated all the
construction experiments, excluding the additional data structures.
In the new experiments, we observed that the speedup increases on average 2.7\%
for $p\leq 24$ and 3.2\% for $p\geq 28$, reaching a maximum speedup of 18.8,
compared to the values reported in Figure~\ref{fig:speedup}.

\begin{table}
\begin{center}
  \small
  \setlength{\tabcolsep}{0pt}
  \begin{tabular}{c@{\hspace{.8em}}r@{\hspace{.8em}}r@{\hspace{.8em}}r@{\hspace{.8em}}r@{\hspace{.8em}}r@{\hspace{.8em}}r@{\hspace{.8em}}r}
    \toprule
    $p$ & {\tt 45Me} & {\tt 50Me} & {\tt 55Me} & {\tt 60Me} & {\tt 65Me} & {\tt
      70Me} & {\tt 75Me} \\
    \midrule
        {\tt seq} & 33.20 & 34.63 & 36.16 & 37.17 & 38.05 & 38.99 & 40.01 \\
        1 & 38.35 & 40.74 & 43.06 & 44.87 & 46.38 & 47.85 & 49.20 \\
        2 & 21.42 & 22.86 & 24.30 & 25.17 & 26.35 & 26.85 & 27.26 \\
        4 & 12.10 & 12.96 & 13.50 & 14.31 & 14.50 & 15.16 & 15.25 \\
        8 & 6.45 & 6.89 & 7.23 & 7.51 & 7.76 & 7.88 & 8.14 \\
        12 & 4.62 & 4.91 & 5.12 & 5.46 & 5.64 & 5.69 & 5.84 \\
        16 & 3.59 & 3.85 & 4.05 & 4.20 & 4.28 & 4.45 & 4.54 \\
        20 & 3.05 & 3.28 & 3.41 & 3.59 & 3.66 & 3.80 & 3.88 \\
        24 & 2.71 & 2.86 & 3.01 & 3.08 & 3.23 & 3.30 & 3.44 \\
        28 & 2.45 & 2.62 & 2.68 & 2.77 & 2.90 & 2.94 & 3.08 \\
        32 & 2.30 & 2.49 & 2.57 & 2.69 & 2.75 & 2.81 & 2.88 \\
        36 & 2.19 & 2.35 & 2.38 & 2.50 & 2.58 & 2.64 & 2.74 \\
        40 & 2.04 & 2.16 & 2.25 & 2.33 & 2.43 & 2.48 & 2.60 \\
        44 & 1.94 & 2.04 & 2.13 & 2.21 & 2.28 & 2.34 & 2.46 \\
        48 & 1.83 & 1.93 & 2.04 & 2.12 & 2.18 & 2.24 & 2.35 \\
        52 & 1.77 & 1.86 & 1.95 & 2.05 & 2.10 & 2.16 & 2.28 \\
        56 & 1.71 & 1.82 & 1.90 & 2.00 & 2.06 & 2.14 & 2.23 \\
        \bottomrule
  \end{tabular}
\end{center}
\caption{Running times of the parallel construction algorithm varying the edge
  density for the dataset {\tt pe25M}. The running times are measured in
  seconds.}
\label{tbl:edgeDensity}
\end{table}
Table~\ref{tbl:edgeDensity} shows the running time for different edge densities
of the dataset pe25M, and Figure~\ref{fig:speedup-reduced} shows the
corresponding speedups compared with the algorithm {\tt seq}. The different
densities are generated by deleting $x$ 
million edges from the dataset pe25M, with $x\in\{5,10,15,20,25,30\}$. If
several components are generated, we reconnect them by restoring
one edge between two components and then choosing new edges to be deleted. Thus, we report results for
45 to 75 ({\tt 45Me} to {\tt 75Me}) million edges. The
dataset  {\tt 75Me} corresponds to the original dataset pe25M. We observe a
decrease in the running time for all values of $p$, according to the decrease in
the number of edges. With respect to {\tt 75Me}, the rest of the datasets show a
greater decrease in the running time for increasing values of $p$, reaching
speedups of up to 19.5 for {\tt 45Me}. In the case of datasets with the same number of edges
(see columns {\tt pe15M} and {\tt pe20M} in Table~\ref{tbl:parallelTimes}, and
columns {\tt 45Me} and {\tt 60Me} in Table~\ref{tbl:edgeDensity}), the datasets
with higher number of vertices show higher running times. Comparing
Figures~\ref{fig:speedup} and \ref{fig:speedup-reduced}, we observe that our
algorithm 
scales similarly for triangulated and non-triangulated graphs.

\begin{figure}[t]
\centerline{\includegraphics[width=0.7\textwidth]{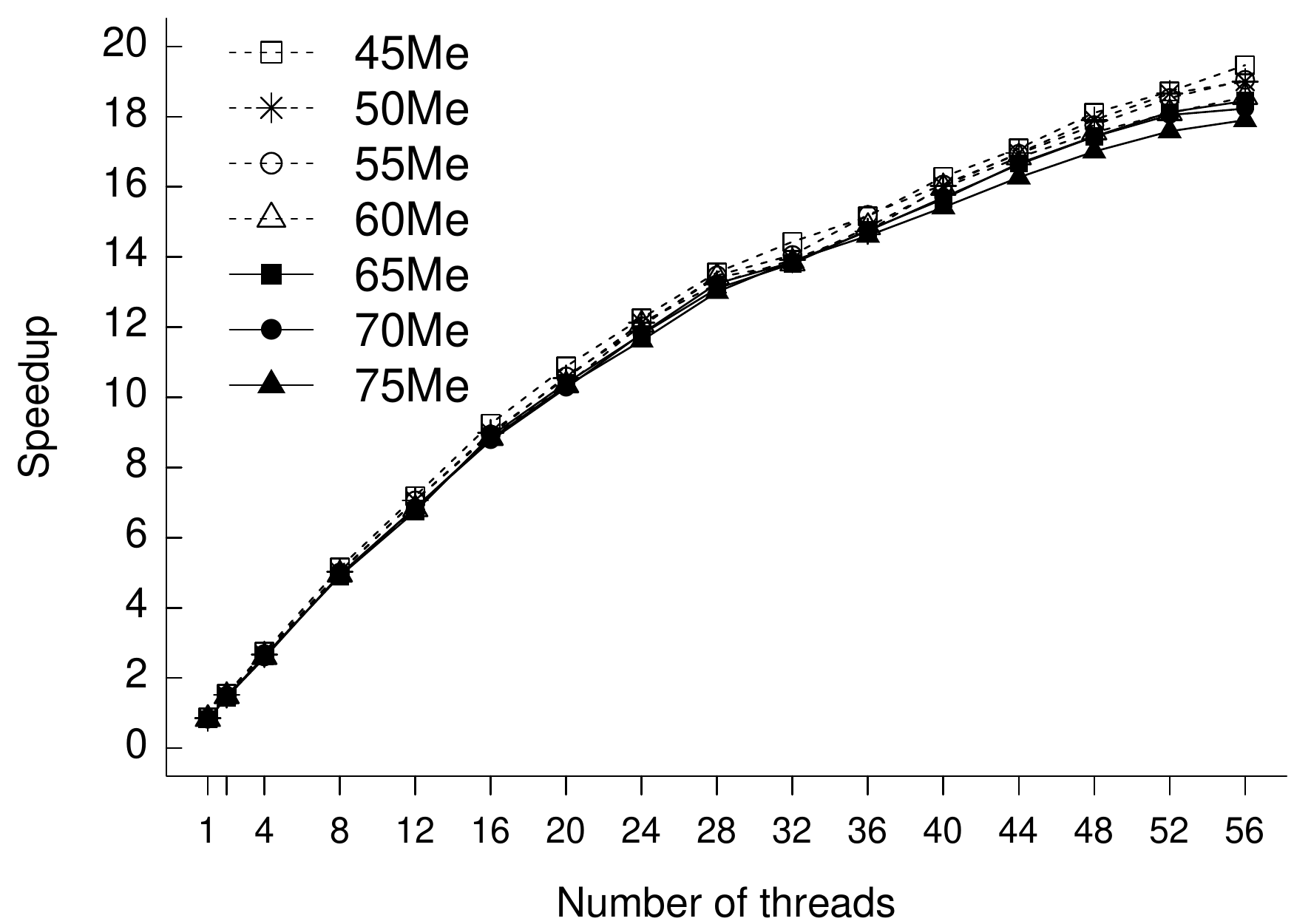}}
\caption{Speedup of the parallel algorithm varying the edge density for the
  dataset {\tt pe25M}.}
\label{fig:speedup-reduced}
\end{figure}

Figure~\ref{fig:memory} shows the memory consumption of our algorithm. 
Specifically, the figure shows for each dataset the space used by its
adjacency list representation ({\tt inputGraph}), the peak consumption of our
construction ({\tt peakMem}) in addition to the input and the output, the space
of its plain representation ({\tt plainGraph}), and the size of its
compact representation ({\tt compGraph}). The plain representation, consisting
of an array of edges of length $2m$ and an array of vertices of length $n$,
is enough to navigate
a graph, but for the construction we need more information about the embedding 
of the input graph. This richer adjacency list representation is what we call 
{\tt inputGraph}. To measure the peak consumption, we use {\tt
  malloc\_count}\footnote{Timo Bingmann. Malloc\_count - Tools for runtime memory usage
  analysis and profiling. URL:\url{https://panthema.net/2013/malloc_count/}.
  Last accessed: August 08, 2017.},
which monitors the memory allocated and released
with {\tt malloc} and {\tt free}, respectively, and reports the peak usage.
The observed peak consumption equals the size of the arrays $LE$, $D_{pos}$ and
$D_{edge}$. Compared with the space consumption of the input 
adjacency list representation, our implementation uses 73\% of extra space. The
final compact representation uses about 8\% of the plain representation, as we have seen.

\begin{figure}[t]
\centering
    \includegraphics[width=0.7\textwidth]{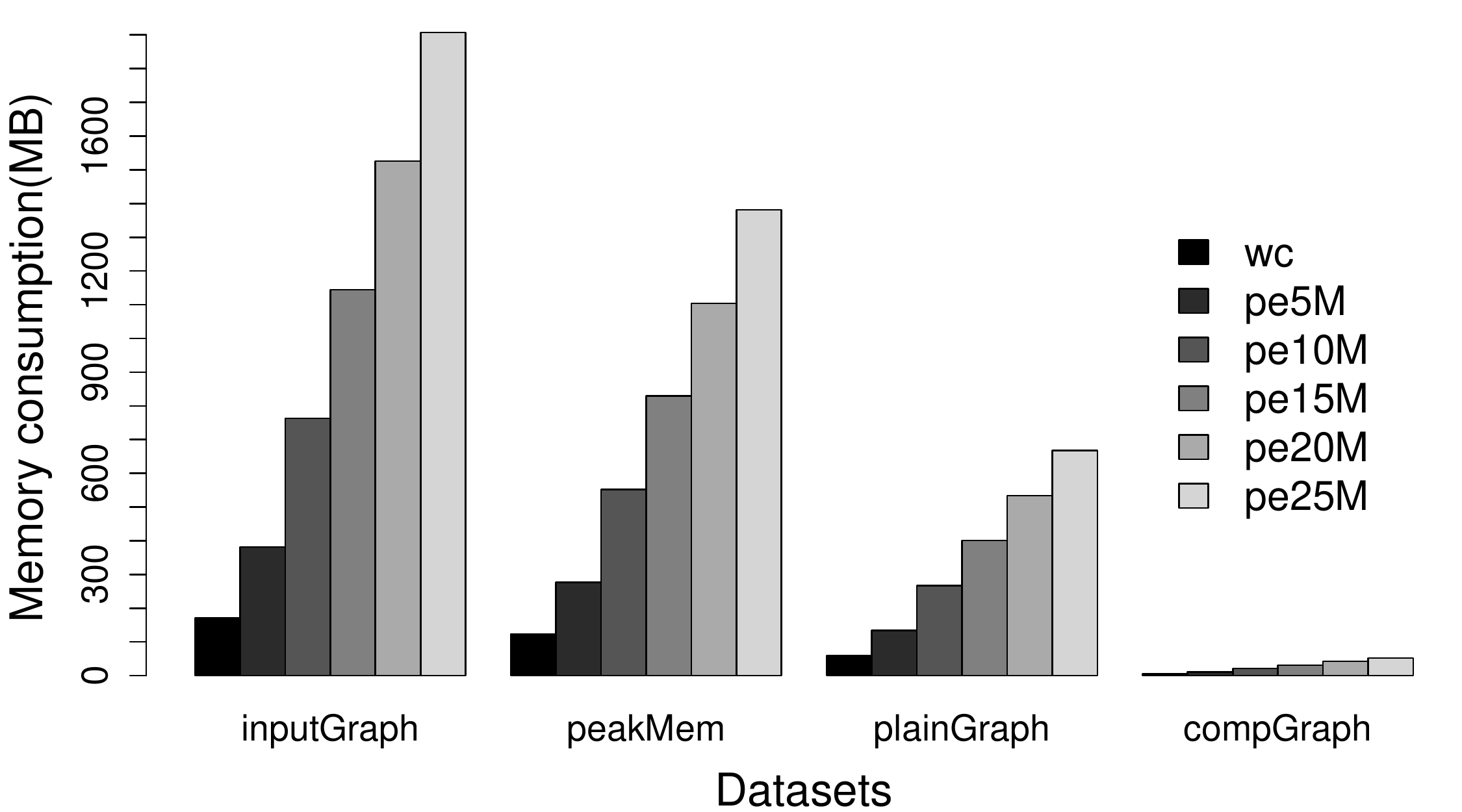}
\caption{Memory consumption of the parallel algorithm and the final compact
structure.}
\label{fig:memory}
\end{figure}

\section{Conclusions and future work}
\label{sec:future}

Tur\'an's representation of planar embeddings \cite{Turan1984} is much simpler
than the known alternatives and encodes any planar embedding of $m$ edges in
just $4m$ bits, close to the lower bound of $3.58m$ bits. In this paper we have
shown how to add $o(m)$ bits to this encoding in order to support fast 
nagivation and queries of the graph, in constant time for the most fundamental
operations. While there are asymptotically optimal representations
\cite{BlellochFarzan2010}, the simplicity of Tur\'an's encoding enabled us to
introduce the first actual implementation of such a compact data structure,
where the basic navigation operations are solved within microseconds. Further,
the structure can be built at a rate of about one microsecond per edge, and
the construction can be parallelized with linear speedup and an efficiency
near 50\%. Our parallel construction algorithm has linear work and logarithmic
span on the dynamic multithreaded model once a spanning tree of the embedding
is computed.

One intriguing question is about the queries we do not support in constant time.
Some previous representations \cite{MR01,ChiangLinLu2005,BlellochFarzan2010}
can compute the degree of a node in $\Oh{1}$ time, whereas we can handle any
superconstant time. Similarly, they can answer neighbour
queries in $\Oh{1}$ time, whereas our structure needs superlogarithmic time.
The representation closest to ours \cite{ChiangLinLu2005} uses the same
technique of two types of parentheses, but the arrangement of the parentheses
follows a so-called orderly spanning tree. While much more complex to build and unable
to represent some embeddings, such spanning tree induces a certain regularity on
the representation of the edges leaving each node, which allows determining in
constant time the number of such edges, and whether two nodes are connected.
It is an interesting question whether we can find a simpler arrangement that
retains those properties.

Another future research line is how to make our data structure dynamic. We can
use a scheme inspired by Munro {\it et al.}~\cite{MNV15}.
Suppose we store our static data structure and
a dynamic buffer that contains information about edges that have been added or
deleted.  If we want to know if an edge is present, we check our static data
structure and then check the buffer to see if its status has changed.  Once the
buffer becomes too large --- e.g., more than \(m / \lg^\epsilon m\) bits --- we
rebuild our static structure.  Even when updates arrive sequentially, there are
some issues to consider, such as how to quickly report the neighbours of a node
that originally had many edges but has had most of them deleted (perhaps by
moving all the information about a node into the buffer when half its incident
edges have been updated) and how to detect if the graph has become non-planar.
There are more issues when the updates can be made in parallel, since then we may need
locks for nodes and finding a practical design becomes challenging.

Finally, we
believe we can generalize our data structure to store efficiently graphs
that are almost planar, using for example generalizations of the technique of
Fischer and
Peters~\cite{FischerPeters2016} to store graphs that are almost trees.
Of course, it is NP-hard to find the maximum planar subgraph of an arbitrary
graph~\cite{Yannakakis1979}, but there have been recent advances in
approximating it and in practice bridges and tunnels, for example, might already
be identified anyway. 

\section*{Acknowledgments}

    The first author received funding from CORFO 13CEE2-21592
    (2013-21592-1-INNOVA PRODUCCION2013-21592-1). The second author received
    funding from Conicyt Fondecyt grant 3170534. The second, third and fifth
    authors received travel funding from EU grant H2020-MSCA-RISE-2015 BIRDS GA
    No.\ 690941, and funding from Basal Funds FB0001, Conicyt, Chile. The third
    author received funding from Academy of Finland grant 268324. The fourth
    author received funding from NSERC of Canada. The fifth author received
    funding from Millennium Nucleus Information and Coordination in Networks,
    ICM/FIC RC130003. Early parts of this work were done while the third author
    was at the University of Helsinki and while the third and fifth authors were
    visiting the University of A Coru\~na.

    Many thanks to J\'er\'emy Barbay, Luca Castelli Aleardi, Guojing Cong, Arash
    Farzan, Cecilia Hern\'andez, Ian Munro, Pat Nicholson, Romeo Rizzi and Julian Shun for
    fruitful discussions. We thank Susana Ladra and Guy Blelloch for sharing
    their k\textsuperscript{2}-tree and graph separators code with us. We also
    thank Telefonica I+D, in particular, Pablo Garc\'ia, for sharing their
    computing equipment with us. The third author is grateful to the late David
    Gregory for his course on graph theory.

\section*{References}

\bibliography{CGTA}

\end{document}